    \providecommand\BibTeX{{%
        \normalfont B\kern-0.5em{\scshape i\kern-0.25em b}\kern-0.8em\TeX}}}
\DeclareMathOperator{\Res}{Res}
\DeclareMathOperator{\PC}{PC}
\DeclareMathOperator{\Width}{Width}
\DeclareMathOperator{\Space}{Space}
\DeclareMathOperator{\Deg}{Deg}
\newcommand{\bbF}{{\mathbb{F}}}
\newcommand{\calC}{{\mathcal{C}}}
\newcommand{\calF}{{\mathcal{F}}}
\newcommand{\calG}{{\mathcal{G}}}
\newcommand{\calH}{{\mathcal{H}}}
\newcommand{\calN}{{\mathcal{N}}}
\newcommand{\ResXor}{\Res(\oplus)}
\newcommand{\PHP}{\mathrm{PHP}}
\newcommand{\HOLES}{\mathrm{HOLES}}
\newcommand{\Ordering}{\mathrm{Ordering}}
\newcommand{\ORDER}{\mathrm{ORDER}}
\newcommand{\DLO}{\mathrm{DLO}}
\newcommand{\WORDER}{\mathrm{WORDER}}
\newcommand{\GOP}{\mathrm{GOP}}
\newcommand{\FPHP}{\mathrm{FPHP}}
\newcommand{\NM}{\mathit{NM}}
\newcommand{\card}[1]{\left|#1\right|}
\begin{document}

\title{Resolution Over Linear Equations: Combinatorial Games for Tree-like Size and Space}

\author{Svyatoslav Gryaznov}
\authornote{The main part of this work was completed while the authors S.~Gryaznov and A.~Riazanov were affiliated to the St. Petersburg Department of Steklov Mathematical Institute of Russian Academy of Sciences (Russia).}
\email{svyatoslav.i.gryaznov@gmail.com}
\orcid{0000-0002-5648-8194}
\affiliation{%
    \institution{Imperial College London}
    \city{London}
    \country{United Kingdom}
}

\author{Sergei Ovcharov}
\email{sergei.d.ovcharov@gmail.com}
\orcid{0000-0002-9478-1949}
\affiliation{%
    \institution{St. Petersburg State University}
    \city{St. Petersburg}
    \country{Russia}
}

\author{Artur Riazanov}
\authornotemark[1]
\email{artur.riazanov@epfl.ch}
\orcid{0000-0001-7892-1502}
\affiliation{%
    \institution{\'{E}cole Polytechnique F\'{e}d\'{e}rale de Lausanne}
    \city{Lausanne}
    \country{Switzerland}
}


\begin{abstract}
    We consider the proof system $\ResXor$ introduced by Itsykson and Sokolov (Ann. Pure Appl. Log.'20), which is an extension of the resolution proof system and operates with disjunctions of linear equations over $\bbF_2$.

    We study characterizations of tree-like size and space of $\ResXor$ refutations using combinatorial games. Namely, we introduce a class of extensible formulas and prove tree-like size lower bounds on it using Prover--Delayer games, as well as space lower bounds. This class is of particular interest since it contains many classical combinatorial principles, including the pigeonhole, ordering, and dense linear ordering principles.
    Furthermore, we present the width-space relation for $\ResXor$ generalizing the results by Atserias and Dalmau (J. Comput. Syst. Sci.'08) and their variant of Spoiler--Duplicator games.
\end{abstract}

\begin{CCSXML}
<ccs2012>
    <concept>
        <concept_id>10003752.10003777.10003785</concept_id>
        <concept_desc>Theory of computation~Proof complexity</concept_desc>
        <concept_significance>500</concept_significance>
    </concept>
    <concept>
        <concept_id>10002950.10003624.10003625</concept_id>
        <concept_desc>Mathematics of computing~Combinatorics</concept_desc>
        <concept_significance>300</concept_significance>
    </concept>
</ccs2012>
\end{CCSXML}
    
\ccsdesc[500]{Theory of computation~Proof complexity}
\ccsdesc[300]{Mathematics of computing~Combinatorics}

\keywords{resolution, linear resolution, combinatorial games, lower bounds, space complexity}


\maketitle

\section{Introduction}

The resolution proof system Res is among the most studied and well-known proof systems. Many state-of-the-art SAT solvers have resolution as their underlying proof system. An important subsystem of Res is tree-like resolution, which uses every derived clause at most once. Such proofs can be arranged into binary trees; hence the name. This proof system is primarily motivated by DPLL SAT solvers~\cite{Davis1960,Davis1962}, which construct tree-like refutations. However, we stress that modern SAT solvers employ other techniques and the resulting proofs are not tree-like.

The primary complexity measure for proof systems is the \emph{proof size}. In the context of SAT solvers it corresponds to the time required for the algorithm to verify a refutation.

Other important complexity measures are the \emph{proof space} and \emph{width}. The former, denoted by $\Space$, describes the memory required by a SAT solver. Formally, we will consider the \emph{clause-space}, that is the minimum number of clauses that must be retained in memory at any given time, and simply refer to it as ``space''. The latter, denoted by $\Width$, measures the number of variables that can appear in the clauses of a proof. Atserias and Dalmau~\cite{Atserias2008} proved the relation between these two measures for resolution and showed that for any $r$-CNF $\phi$ it holds that $\Space(\phi) \ge \Width(\phi) - r + 1$.

Lower bounds for many propositional proof systems are often formulated in terms of combinatorial games, and resolution is not an exception. Tree-like size lower bounds can be proved using Prover--Delayer games introduced by Impagliazzo and Pudl\'{a}k~\cite{Pudlak2000}. An unsatisfiable CNF is given to two players called Prover and Delayer. Delayer pretends to have a satisfying assignment and Prover tries to find a contradiction by asking for the values of the formula variables in an arbitrary order. Delayer can either assign the value or allow Prover to choose the value arbitrarily. In the latter case, Delayer earns a coin, and the number of earned coins corresponds to the logarithm of tree-like size.
Another game characterization we are interested in describes the space-width relation and was suggested by Atserias and Dalmau~\cite{Atserias2008}. They consider the existential $k$-pebble games, which are a particular type of Ehrenfeucht--Fra\"{i}ss\'{e} games. It is played by two players called Spoiler and Duplicator. The game was originally stated in the first order logic notation. We describe its adaptation to CNF formulas. Similarly to the previous game, Duplicator presumably has a satisfying assignment and Spoiler wants to obtain a contradiction by asking Duplicator about the values of the formula variables. Moreover, at most $k$ values can be retained in memory at any given time: Spoiler is forced to forget some values at some point. If Duplicator can win the game, i.e. Spoiler cannot arrive to a contradiction, then the proof width is at least $k+1$.

\paragraph{Resolution over linear extensions.}
Raz and Tzameret~\cite{DBLP:journals/apal/RazT08} introduced an extension of resolution with linear equations over the integers $\Res(\mathrm{lin}_{\mathbb{Z}})$. This extension empowers the resolution with the ability to count and thus allows polynomial-time refutations for hard tautologies based on counting like the pigeonhole principle ($\PHP$). Itsykson and Sokolov~\cite{Itsykson2014} considered a variant of this proof system $\ResXor$---or $\Res(\mathrm{lin}_{\bbF_2})$---that performs the counting modulo two. They proved several lower bounds on the tree-like variant of this system and on some of its subsystems. Later, Part and Tzameret~\cite{PartTzameret2018} studied the more general case of $\Res(\mathrm{lin}_{R})$, where $R$ is an arbitrary ring. In this paper we focus solely on $\ResXor$ and do not consider other rings.

\paragraph{$m$-extensible formulas and ordering principles.}
We start with generalizing the ideas behind the proof on the size of tree-like $\ResXor$ refutations for the weak pigeonhole principle $\PHP_n^m$~\cite{Itsykson2014} and proving size lower bounds for a vast class of so-called \emph{$m$-extensible formulas} w.r.t. a subset of formula clauses. Informally, it consists of formulas $\phi$ such that every linear system with a small number of equations that does not contradict any ``narrow'' clauses does not imply $\neg C$, where $C$ is a clause of $\phi$.
The formal definition is presented in Section~\ref{sec:framework}. We will sometimes call such formulas simply \emph{extensible} when we do not use the actual parameters. The proof is based on Prover--Delayer games, which were adapted for $\ResXor$ in~\cite{Itsykson2014}.

The class of extensible formulas includes the pigeonhole, ordering, and dense linear ordering principles. We show that the ordering principle gives a natural separation of resolution and tree-like $\ResXor$. Although such formulas were known before, the previous constructions require a non-trivial gadget and the proof followed from the communication complexity rather than being specific to $\ResXor$~\cite{Goos2014}. We also note that a slightly weaker lower bound for the ordering principle follows from the degree lower bound in Polynomial Calculus for the graph variant of the principle~\cite{Galesi2010,Nordstrom2015}. Our results for the dense linear ordering principle are new and are not known for stronger proof systems. In particular, its complexity for Polynomial Calculus remains an open problem. 

\paragraph{Space complexity of tree-like $\ResXor$ proofs.}
We also consider the space complexity of $\ResXor$-proofs and provide two different characterizations of it. It turns out that extensible formulas not only require large tree-like size, but also large space. We prove it by constructing a game characterization similar to Prover--Delayer games. In Section~\ref{sec:space_width} we consider another characterization of space using width, and extend Spoiler--Duplicator games to $\ResXor$.


\subsection{Our contribution}

\begin{enumerate}
    \item We introduce the class of extensible formulas and prove tree-like size and space lower bounds on formulas of this class.
    
    \item We prove that several well-known combinatorial principles belong to the classes of extensible formulas with appropriate parameters: the pigeonhole, ordering, and dense linear ordering principles. 
    
    \item We present the width-space relation for $\ResXor$ generalizing the results by Atserias and Dalmau~\cite{Atserias2008}.
\end{enumerate}

Some of these results appeared in Proceedings of CSR 2019~\cite{DBLP:conf/csr/Gryaznov19}.

\section{Preliminaries}\label{sec:prelim}

\subsection{Resolution over linear equations}

In this section we describe the proof system $\ResXor$ that we use throughout the paper~\cite{Itsykson2014}.

A \emph{linear clause} is a disjunction $\bigvee_{i=1}^m (f_i = \alpha_i)$ of linear equations ${\{f_i=\alpha_i\}}_{i=1}^m$ (linear literals), where $f_i$ are linear forms over $\bbF_2$ and $a_i \in \{0,1\}$ for $i \in [m]$.
We say that a linear clause $D$ is \emph{semantically implied} by linear clauses $C_1, \ldots, C_s$ if every satisfying assignment of $\bigwedge_{i=1}^s C_i$ also satisfies $D$. We denote it as $C_1, \ldots, C_s \vDash D$. A linear CNF formula is a conjunction of linear clauses.

This proof system has two rules:
\begin{enumerate}
  \item The \textbf{Resolution rule} allows deriving the linear clause $A \lor B$ from linear clauses $A \lor (f = 0)$ and $B \lor (f = 1)$, where $f$ is a linear form over $\bbF_2$.
  \item The \textbf{(Semantic) Weakening rule} allows deriving a linear clause $D$ from a linear clause $C$ if $D$ is semantically implied by $C$.
\end{enumerate}

A \emph{refutation} (proof of unsatisfiability) of a CNF formula $\phi$ is a derivation of the empty clause from the clauses of $\phi$ using these rules. A proof is called \emph{tree-like} if it can be arranged as a binary tree, where the root is labeled with the empty clause, the leaves are labeled with clauses of $\phi$, and every internal node is labeled with a result of an application of one of the rules to the children of this node.

Part and Tzameret~\cite{PartTzameret2018} considered similar generalizations over other fields (and rings) besides $\bbF_2$. Although these generalizations are not proof systems in the classical sense since the proofs cannot be verified in polynomial time~\cite{PartTzameret2018,DBLP:conf/csr/Gryaznov19}. In what follows we focus solely on the proof system $\ResXor$.

\subsection{Polynomial Calculus}

Another proof system we consider in our paper is Polynomial Calculus over a field $\bbF$~\cite{DBLP:conf/stoc/CleggEI96,DBLP:journals/siamcomp/AlekhnovichBRW02}, which we denote by $\PC_\bbF$. This proof system operates with multivariate polynomials over $\bbF$. Given a set of polynomials $\calF$, it uses the following derivation rules:
\begin{enumerate}
    \item The \textbf{Axiom download rule} allows deriving any axiom from $\calF$ or any Boolean axiom $x^2-x$, wher $x$ is a variable.
    \item The \textbf{Linear combination rule} allows deriving the polynomial $\alpha p + \beta q$ from polynomials $p$ and $q$, where $\alpha, \beta \in \bbF$.
    \item The \textbf{Multiplication by variable rule} allows deriving $xp$ from a polynomial $p$, where $x$ is a variable.
\end{enumerate}

A $\PC_{\bbF}$ refutation of $\calF$ is a derivation of the constant $1$ polynomial using the above rules.
The degree of a $\PC_\bbF$ refutation of $\calF$, denoted $\Deg(\calF)$, is the maximum degree of polynomials appearing in the refutation.
Any clause $C$ with $m$ literals can be expressed as a degree-$d$ polynomial using the standard arithmetization technique. Given a CNF formula $\phi$, a $\PC_\bbF$ refutation of $\phi$ is a refutation of the arithmetization of $\phi$.

In this paper we are only interested in the case $\bbF = \bbF_2$.

\subsection{Prover--Delayer games}

Pudl\'{a}k and Impagliazzo introduced in~\cite{Pudlak2000} a method for proving lower bounds on the size of tree-like resolution refutations using Prover--Delayer games. We incorporate its straightforward extension (see~\cite{Itsykson2017} for a detailed description of the method).

The game is played by two players: Prover and Delayer. They both have an unsatisfiable linear CNF formula $\phi$. Delayer pretends to know a satisfying assignment of $\phi$ and Prover wants to find a falsified clause by asking Delayer about the values of various linear forms over $\bbF_2$. The game consists of two steps:
\begin{enumerate}
  \item Prover chooses a linear form $g$ over $\bbF_2$ and asks Delayer for the value of $g$ on their assignment.
  \item Delayer answers with either a constant ($0$ or $1$) or the symbol $*$, allowing Prover to choose an arbitrary value. In the latter case, Delayer earns a coin.
\end{enumerate}

The game ends when there is a clause $C$ of $\phi$ which is falsified by every solution of the Prover's system.

\begin{lemma}[\cite{Itsykson2014,Itsykson2017}]\label{lem:proverdelayer}
  Let $\phi$ be an unsatisfiable linear CNF formula.
  If Delayer has a strategy guaranteeing a win of at least $k$ coins against any Prover in the game on $\phi$, then the size of any tree-like $\ResXor$ refutation of $\phi$ is at least $2^{k}$.
\end{lemma}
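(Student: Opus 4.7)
The plan is to argue by the standard tree traversal: from any tree-like $\ResXor$ refutation $\pi$ of $\phi$ of size $s$ I would build a concrete Prover strategy, and then show that against Delayer's assumed $k$-coin strategy the tree must have size at least $2^k$. The invariant I would maintain is that the Prover is located at some node $v$ of $\pi$ and has accumulated a linear system $L_v$ (the equations given or forced so far) such that \emph{every} solution of $L_v$ falsifies the clause labelling $v$. This invariant clearly holds at the root (labelled with the empty clause) and the game ends successfully as soon as the Prover reaches a leaf, since leaves are axioms of $\phi$.

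Next I would describe how the Prover moves along the tree depending on the inference rule at the current node, preserving the invariant. If $v$ is a weakening node labelled $D$ with unique child labelled $C$, then $C\vDash D$ gives contrapositively that every assignment falsifying $D$ also falsifies $C$, so the Prover simply walks to the child without asking any question. If $v$ is a resolution node labelled $A\lor B$ obtained from children $A\lor (f=0)$ and $B\lor(f=1)$, the Prover asks for the value of $f$; on answer $0$ it adds $f=0$ to $L_v$ and descends to the child $B\lor(f=1)$ (both disjuncts are now falsified), and symmetrically on answer $1$. On a $*$ answer the Prover itself picks a value, and crucially chooses the one leading to the \emph{smaller} of the two subtrees rooted at the children.

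With this strategy, I would track the size $t(v)$ of the subtree rooted at the Prover's current node. Weakening and non-$*$ resolution steps only make $t$ strictly smaller. At a $*$ step, since the two child subtrees have sizes summing to at most $t(v)-1$, the chosen one has size at most $\lfloor (t(v)-1)/2\rfloor \le t(v)/2$, so every coin Delayer earns at least halves the subtree size. Starting from $t(\text{root})=s$ and ending at a leaf with $t\ge 1$, after $k$ coins we have $1 \le s/2^k$, i.e.\ $s\ge 2^k$. Since by hypothesis Delayer always earns at least $k$ coins against every Prover (in particular against the one just constructed), this yields the claimed bound.

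The routine parts are bookkeeping: checking that the invariant is preserved under each rule (weakening by contraposition of $\vDash$, resolution by case analysis on the two disjuncts) and that $L_v$ needs no consistency hypothesis because an inconsistent $L_v$ vacuously falsifies every clause and the analysis still goes through. The only genuinely delicate point, and the reason tree-likeness is essential, is the halving estimate at $*$-answers: it relies on the two subtrees being \emph{disjoint}, so that their sizes add up inside $t(v)$. In a dag-like proof the same branching choice would not yield a halving, which is precisely why this technique is tailored to tree-like $\ResXor$.
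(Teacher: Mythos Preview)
The paper does not prove this lemma; it simply cites it from \cite{Itsykson2014,Itsykson2017}. Your proposal is correct and is precisely the standard argument used in those references: build a Prover strategy that walks down the refutation tree maintaining the invariant that the current system falsifies the current clause, descend silently through weakening nodes, query the resolved form at resolution nodes, and on a $*$ answer branch to the smaller subtree so that each coin halves the remaining subtree size. There is nothing to compare against in the present paper.
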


\subsection{Space complexity}
The space complexity measure was introduced by Esteban and Tor\'{a}n~\cite{DBLP:journals/iandc/EstebanT01} and later extended by Alekhnovich~et~al.~\cite{DBLP:journals/siamcomp/AlekhnovichBRW02}.
We introduce its generalization for $\ResXor$ proofs following~\cite{Itsykson2017}.
Let $\phi$ be an unsatisfiable linear CNF formula. A \emph{configuration} is a set of linear clauses. A refutation $\pi$ of $\phi$ is a sequence of configurations $S_1, \ldots, S_m$ such that $S_1$ is empty, $S_m$ contains the empty clause, and for every $i \in [m]$ the state $S_i$ is obtained from $S_{i-1}$ by the application of one of the following basic operations:
\begin{itemize}
    \item \textbf{Download operation}: $S_{i} \coloneqq S_{i-1} \cup \{C\}$, where $C$ is a clause of $\phi$.
    \item \textbf{Erasure operation}: $S_{i} \coloneqq S_{i-1} \setminus \{C\}$, where $C \in S_{i-1}$.
    \item \textbf{Inference operation}: $S_{i} \coloneqq S_{i-1}\cup\{C\}$, if $C$ can be deduced from the clauses in $S_{i-1}$ using the Resolution and Weakening rules.
\end{itemize}

The \emph{space} of a refutation $\pi = (S_1, \ldots, S_m)$ is defined as
\[
    \Space(\pi) \coloneqq \max_{i \in [m]} \card{S_i}.
\]

The \emph{space} of an unsatisfiable CNF formula $\phi$ is
\[
    \Space(\phi) \coloneqq \min_{\pi} \Space(\pi),
\]
where the minimum is taken over all $\ResXor$ refutations of $\phi$.

\section{\texorpdfstring{$m$}{m}-extensible formulas}\label{sec:framework}

We use a crucial fact about $\bbF_2$ which distinguishes it from other rings and makes the analysis of $\ResXor$ proofs easier. That is, a disequality $f(x) \neq \alpha$ is equivalent to the equality $f(x) = 1 - \alpha$. Due to this fact, it is more convenient to work with \emph{negations} of linear clauses: the negation $\neg C$ of a linear clause $C = \bigvee_{i=1}^m (f_i = \alpha_i)$ is the linear system $\bigwedge_{i=1}^m (f_i = 1 - \alpha_i)$. The correspondence between a linear clause and its negation is the following: a variable assignment is a solution of $\neg C$ if and only if it falsifies $C$.

Itsykson and Sokolov proved a tree-like size lower bound for the weak pigeonhole principle $\PHP_n^m$, which was later generalized by Part and Tzameret~\cite{PartTzameret2018} for arbitrary rings. These proofs rely on particular properties of the solution space of $\PHP$: if an assignment $\alpha$ satisfies all the ``short'' clauses of $\PHP$ (in this case, the hole clauses), then so does every $\beta \le \alpha$, i.e. we can safely replace any $1$ with $0$ in $\alpha$.

Consider an unsatisfiable linear CNF formula $\phi$ in the variables $v$. Let $F$ be a subset of the clauses of $\phi$. A solution $\sigma$ of a linear system $Av=b$ is \emph{$F$-proper} if for every clause $C \in F$ the solution $\sigma$ satisfies $C$. We say that $\phi$ is \emph{$m$-extensible w.r.t. $F$} if for every linear system $Av=b$ over $\bbF_2$ with less than $m$ equations, which has an $F$-proper solution, and every $C \in \phi \setminus F$, there exists an $F$-proper solution of $Av=b$ that satisfies $C$.

The following theorem is a direct generalization of Lemma~3.5 from~\cite{Itsykson2014}. We present its proof here for the sake of completeness.

\begin{theorem}[cf.~\cite{Itsykson2014}]\label{thm:main}
    Let $\phi$ be a linear CNF formula. If there exist a number $m$ and a subset of clauses $F \subseteq \phi$ such that $\phi$ is $m$-extensible w.r.t. $F$, then the size of any tree-like $\ResXor$ refutation of $\phi$ is at least $2^{m}$.
\end{theorem}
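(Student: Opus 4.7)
By Lemma~\ref{lem:proverdelayer}, it suffices to exhibit a Delayer strategy in the Prover--Delayer game on $\phi$ that earns at least $m$ coins against every Prover. The plan is to have Delayer maintain an internal ``committed'' linear system $A'v=b'$, initially empty, which records precisely the equations $g=c'$ born from previous $*$-answers (where $c'$ is Prover's subsequent choice). Let $S$ denote the set of $F$-proper solutions of $A'v=b'$; initially $S$ is the set of all $F$-proper assignments, which we may assume nonempty (otherwise the hypothesis of $m$-extensibility is vacuous and $\phi$ admits a trivial refutation). The central invariant to be maintained is that $S$ coincides with the set of $F$-proper solutions of Prover's current system $Av=b$, even though $Av=b$ will in general contain strictly more equations than $A'v=b'$.

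When Prover queries a linear form $g$, Delayer inspects $S$. If $g$ is constant on $S$ with value $c$, Delayer answers $c$ and earns no coin; Prover's system acquires the equation $g=c$, but this equation already holds on all of $S$, so the invariant is preserved and $A'v=b'$ is untouched. Otherwise $S$ contains solutions with both $g=0$ and $g=1$, and Delayer answers $*$, earning a coin; whatever value $c'$ Prover picks, $S\cap\{g=c'\}$ is nonempty, so Delayer appends $g=c'$ to $A'v=b'$ and restricts $S$ accordingly. Because $g$ is non-constant on $S$, it cannot lie in the row-span of $A'$, so the rank of $A'v=b'$ grows by exactly one on every coin; after $k$ coins $A'v=b'$ is a system of $k$ linearly independent equations.

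It remains to show that the game cannot end while fewer than $m$ coins have been earned. Fix $k<m$ and any clause $C\in\phi$. If $C\in F$, then any element of $S$ (which is nonempty and equal to the $F$-proper solutions of $Av=b$) satisfies $C$ by the definition of $F$-properness, so $C$ is not falsified by $Av=b$. If $C\in\phi\setminus F$, apply the $m$-extensibility hypothesis to $A'v=b'$: it has fewer than $m$ equations and admits an $F$-proper solution, so there exists an $F$-proper solution of $A'v=b'$ satisfying $C$; by the invariant this solution is also an $F$-proper solution of $Av=b$, and in particular it is a solution of $Av=b$ satisfying $C$. In either case no clause is falsified, so the game continues. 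Delayer therefore earns at least $m$ coins, and Lemma~\ref{lem:proverdelayer} delivers the claimed $2^m$ lower bound.

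The main conceptual obstacle is the separation between Prover's full system $Av=b$ and Delayer's internal coin-bearing system $A'v=b'$: a direct application of $m$-extensibility to $Av=b$ itself would be too weak, since forced answers can inflate the rank of $Av=b$ far beyond the coin count and exhaust the extensibility budget long before $m$ coins are won. The argument works precisely because every forced equation is already semantically implied by $A'v=b'$ together with the clauses of $F$, so the two systems share their $F$-proper solution sets, and extensibility is legitimately applied to the smaller system whose size equals the coin count.
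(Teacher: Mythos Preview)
Your proof is correct and follows essentially the same approach as the paper's: both use the Prover--Delayer game with Delayer answering a fixed value exactly when the queried form is constant on the current set of $F$-proper solutions, and both exploit the key observation that the subsystem recording only the $*$-answers has the same $F$-proper solutions as Prover's full system, so $m$-extensibility may be applied to the smaller system whose size equals the coin count. The only presentational difference is that you maintain the invariant $S=\{F\text{-proper solutions of }Av=b\}$ explicitly round by round, whereas the paper establishes the equivalent fact retroactively at termination by noting that every forced equation in $\Phi\setminus\Phi'$ is already implied by the $F$-proper solutions of $\Phi'$; the logical content is identical.
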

\begin{proof}
    By Lemma~\ref{lem:proverdelayer}, it is enough to construct a strategy for Delayer to earn at least $m$ coins.

    Let $\Phi$ be the system of all linear forms corresponding to Prover's questions together with its assigned values, i.e. Delayer's answers or Prover's choices when Delayer answered with $*$. We gradually construct $\Phi$ while keeping the following invariant: if Delayer earned less than $m$ coins, then the system $\Phi$ has an $F$-proper solution.

    Initially, the system is empty.
    Suppose that a new round of the game starts with Prover asking about the value of a linear form $g$. If $g(x)=\alpha$ is implied by all $F$-proper solutions of $\Phi$ for some $\alpha \in \bbF_2$, then Delayer simply answers $\alpha$. Otherwise, $g$ is not constant on the set of all $F$-proper solutions of $\Phi$. In that case, Delayer answers $*$ and allows Prover to choose the value arbitrarily.

    Consider three possible conclusions of the game:
    \begin{enumerate}
        \item $\Phi$ does not have any solutions. By invariant, this case is only possible if Delayer has already earned at least $m$ coins.
        \item $\Phi$ only has solutions that falsify clauses from $F$. Similarly to the previous case, Delayer has earned at least $m$ coins.
        \item There is a clause $C \in \phi \setminus F$ that is falsified by all solutions of $\Phi$. In particular, every $F$-proper solution of $\Phi$ falsifies $C$. In this case, we consider the linear system $\Phi' \subseteq \Phi$ corresponding to the answers $*$ of Delayer. By construction, the lines of $\Phi \setminus \Phi'$ must be implied by all $F$-proper solutions of $\Phi'$. Hence, $\Phi'$ does not have any new $F$-proper solutions and $C$ is also falsified by every $F$-proper solution of $\Phi'$.
        Assume that $\Phi'$ has less than $m$ lines. Then, by invariant, it has an $F$-proper solution. Since $\phi$ is $m$-extensible, $\Phi'$ has an $F$-proper solution that satisfies $C$, contradicting the assumption.
        Thus, $\Phi'$ has at least $m$ lines and Delayer has earned at least $m$ coins, each corresponding to a line of $\Phi'$.
    \end{enumerate}
\end{proof}

The following result provides a method for proving lower bounds on the space complexity of extensible CNF formulas. The assumptions of the following theorem are exactly the same as for Theorem~\ref{thm:main}. The proof is not stated in terms of games and is rather straightforward.

\begin{theorem}\label{thm:space}
    Let $\phi$ be a linear CNF formula. If there exist a number $m$ and a subset of clauses $F \subseteq \phi$ such that $\phi$ is $m$-extensible w.r.t. $F$, then the space required by any $\ResXor$ refutation of $\phi$ is at least $m$.
\end{theorem}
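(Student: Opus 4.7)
The plan is to proceed by contradiction. Suppose $\pi = (S_1, \ldots, S_\ell)$ is a $\ResXor$ refutation of $\phi$ with $\Space(\pi) < m$. I will construct by induction on $i$ a linear system $\Psi_i$ satisfying three properties: (i) $\Psi_i$ has an $F$-proper solution; (ii) every $F$-proper solution of $\Psi_i$ satisfies every clause in $S_i$; and (iii) $|\Psi_i| \le |S_i \setminus F|$. Since $|S_i| \le m-1$ throughout the refutation, property (iii) yields $|\Psi_i| < m$, which is what allows repeated invocations of $m$-extensibility. The terminal configuration $S_\ell$ contains the empty clause, which no assignment satisfies, and this contradicts the conjunction of (i) and (ii).

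The concrete construction attaches to each clause $C \in S_i \setminus F$ a single linear literal $\ell_C$ of $C$, and takes $\Psi_i$ to be the collection of these literals. Operation by operation: downloading an axiom $C \in F$ requires no update, since $F$-proper assignments already satisfy $C$; erasing $C$ simply drops $\ell_C$ from $\Psi_i$ (or does nothing if $C \in F$), and this only enlarges the solution set, so (i) is preserved. For an inferred clause $C$, semantic implication together with the invariant at $\Psi_i$ give that every $F$-proper solution of $\Psi_i$ already satisfies $C$, so one may take $\ell_C$ to be any literal of $C$ satisfied by some such solution.

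The one rule that invokes $m$-extensibility is the download of an axiom $C \in \phi \setminus F$. Here the bound $|\Psi_i| \le |S_i| \le m-1 < m$ combined with the existence of an $F$-proper solution of $\Psi_i$ supplies, by the $m$-extensibility of $\phi$ w.r.t.\ $F$, an $F$-proper solution $\sigma$ of $\Psi_i$ satisfying $C$; I then set $\ell_C$ to any literal of $C$ satisfied by $\sigma$ and put $\Psi_{i+1} = \Psi_i \cup \{\ell_C\}$. Every $F$-proper solution of $\Psi_{i+1}$ remains an $F$-proper solution of $\Psi_i$ and additionally satisfies $\ell_C$, hence $C$. The main subtlety I anticipate is the bookkeeping of (iii) across erasures: this is precisely why I want to maintain the explicit correspondence $C \mapsto \ell_C$ on $S_i \setminus F$ rather than carry around an abstract linear system, so that each erasure decrements $|S_i \setminus F|$ and $|\Psi_i|$ simultaneously and the three properties propagate cleanly from $\Psi_i$ to $\Psi_{i+1}$.
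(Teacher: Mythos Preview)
Your proposal is correct and follows essentially the same approach as the paper. The only difference is bookkeeping: the paper maintains a single assignment $\sigma_i$ satisfying $F \cup S_i$ and, at each download of a clause $C \in \phi \setminus F$, reconstructs the linear system on the fly by picking one satisfied literal from each clause of $S_{i-1}$; you instead carry the system $\Psi_i$ (via the map $C \mapsto \ell_C$) throughout and track the stronger invariant that \emph{every} $F$-proper solution of $\Psi_i$ satisfies $S_i$. Both routes invoke $m$-extensibility at exactly the same place and for the same reason.
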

\begin{proof}
    Assume for the sake of contradiction that there is a proof $S_1, \ldots, S_t$ of $\phi$ requiring less than $m$ clauses in memory. We construct a set of assignments $\sigma_1, \ldots, \sigma_t$ such that $\sigma_i$ satisfies all clauses from $F$ and $S_i$. Since $S_t$ consists of the empty clause, this gives us a contradiction.
    For the base case, choose any assignment that does not falsify clauses from $F$. It exists by $m$-extensibility of $\phi$ w.r.t. $F$ applied to the empty system.
    For the inductive step, there are several possibilities:
    \begin{enumerate}
        \item
        $S_i = S_{i-1} \setminus \{C\}$ by an erasure operation. In this case simply set $\sigma_{i} \coloneqq \sigma_{i-1}$.

        \item
        $S_i = S_{i-1} \cup \{C\}$ by an inference operation. Since the inference rules are sound, $C$ is satisfied by $\sigma_{i-1}$ and we can set $\sigma_{i} \coloneqq \sigma_{i-1}$.

        \item
        $S_i = S_{i-1} \cup \{C\}$ by a download operation. If $C \in F$, then we can set $\sigma_{i} \coloneqq \sigma_{i-1}$ since $\sigma_{i-1}$ satisfies all clauses in $F$. If $C \in \phi \setminus F$, we construct a new assignment as follows. Let $S_{i-1} = \{C_1, \ldots, C_k\}$,  where $k < m$. Since $\sigma_{i-1}$ satisfies every clause in $S_{i-1}$, for every $j \in [k]$, there exists a linear literal $(\ell_j = \alpha_j) \in C_j$ that is satisfied by $\sigma_{i-1}$. Consider a linear system $\bigwedge_{j=1}^k (\ell_j=\alpha_j)$. This system has an $F$-proper solution $\sigma_{i-1}$ and contains $k < m$ lines. Thus, by $m$-extensibility, it has an $F$-proper solution $\tau$ that satisfies $C$. By construction, $\tau$ satisfies $S_i$ and we can set $\sigma_i \coloneqq \tau$.
    \end{enumerate}
\end{proof}

\subsection{Applications}\label{sec:applications}

In this section we prove lower bounds for the pigeonhole, ordering, and dense linear ordering (DLO) principles using the above method.

\subsubsection{Pigeonhole principle}

The formula $\PHP_n^m$ encodes the (weak) pigeonhole principle. $\PHP_n^m$ includes the variables $p_{ij}$, where $i \in [m]$ and $j \in [n]$; $p_{ij}$ states that the $i$th pigeon flies to the $j$th hole. It has two types of clauses:
\begin{enumerate}
  \item Pigeon axioms: $P_{i} \coloneqq \bigvee_{j \in [n]} p_{ij}$, for all $i \in [m]$.
  \item Hole axioms: $H_{ijk} \coloneqq \neg p_{ik} \lor \neg p_{jk}$, for all $i \neq j \in [m]$ and all $k \in [n]$.
\end{enumerate}

Define $\HOLES_n^m$ as the set of the hole axioms $\{H_{ijk} : i \neq j \in [m], k \in [n]\}$. In this case $\HOLES_n^m$-proper solutions encode partial injective mappings of pigeons to holes.

A weaker variant of the following lemma was proved in~\cite{Itsykson2014}. It was later improved by Garl\'{\i}k. For the sake of completeness, we present the proof here.

\begin{lemma}[M.~Garl\'{\i}k, personal communication]\label{lem:small_system_php}
    $\PHP_n^m$ is $(n-1)$-extensible w.r.t. $\HOLES_n^m$.
\end{lemma}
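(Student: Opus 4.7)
The plan is to start with an arbitrary $\HOLES_n^m$-proper solution $\sigma$ of a given system $Av=b$ having fewer than $n-1$ equations, together with a pigeon axiom $P_i \in \PHP_n^m \setminus \HOLES_n^m$, and produce an $\HOLES_n^m$-proper solution $\tau$ of the same system that also satisfies $P_i$. If $\sigma$ itself satisfies $P_i$, take $\tau \coloneqq \sigma$; so assume from now on that $\sigma(p_{ij})=0$ for every $j \in [n]$.

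The key construction is a family of ``swap'' vectors $\delta_1, \ldots, \delta_n \in \bbF_2^{mn}$, where $\delta_j$ is supported in column $j$ only: $\delta_j$ has a $1$ in coordinate $p_{ij}$ and additionally a $1$ in coordinate $p_{i_j j}$ whenever there exists a pigeon $i_j$ with $\sigma(p_{i_j j})=1$. Such $i_j$ is unique by $\HOLES_n^m$-properness of $\sigma$ and differs from $i$ by our assumption on $\sigma$. Two facts about this family are crucial. First, the $\delta_j$ have pairwise disjoint supports and are hence linearly independent over $\bbF_2$. Second, for every nonempty $T \subseteq [n]$ the assignment $\sigma + \sum_{j \in T}\delta_j$ is still $\HOLES_n^m$-proper, because within each column $j \in T$ the swap removes whichever pigeon (if any) previously occupied hole $j$ and installs pigeon $i$ there, while columns outside $T$ are untouched; such an assignment clearly satisfies $P_i$ since $T \neq \emptyset$.

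It then suffices to pick $T$ so that $A\bigl(\sum_{j \in T}\delta_j\bigr) = 0$; this is where the bound on the number of equations enters. The $n$ vectors $A\delta_1, \ldots, A\delta_n$ live in $\bbF_2^k$, where $k < n-1$ is the number of rows of $A$, so they lie in a space of dimension at most $n-2$ and must admit a nontrivial $\bbF_2$-linear dependence. Any such dependence yields a nonempty $T \subseteq [n]$ with $\sum_{j \in T}A\delta_j = 0$, and then $\tau \coloneqq \sigma + \sum_{j \in T}\delta_j$ satisfies $A\tau = A\sigma = b$, is $\HOLES_n^m$-proper, and satisfies $P_i$.

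There is no real obstacle in the plan; it is essentially a dimension count. The only point worth double-checking carefully is that combining several swaps simultaneously in disjoint columns does not accidentally violate some hole axiom, but this is immediate from the column-wise disjointness of the supports of the $\delta_j$ and the $\HOLES_n^m$-properness of $\sigma$.
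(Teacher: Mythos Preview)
Your proof is correct and takes a genuinely different route from the paper's argument.

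The paper proceeds in two phases. First it passes from an arbitrary $\HOLES_n^m$-proper solution to one with a \emph{minimal} number of ones, using a dimension argument on the columns of $A$ indexed by the support of the current solution; this guarantees at most $k$ ones and hence at least $n-k$ empty holes. In the second phase it runs another dimension argument, now on the columns indexed by the support together with the positions $(i,j)$ for empty holes $j$, to place pigeon $i$ somewhere. The monotonicity property ``changing a $1$ to a $0$ preserves $\HOLES$-properness'' is what makes both phases work.

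Your approach bypasses the minimization phase entirely. The swap vectors $\delta_j$ are designed so that adding any nonempty sum of them to $\sigma$ stays $\HOLES_n^m$-proper regardless of how many ones $\sigma$ had to begin with: each $\delta_j$ evicts the current occupant of hole $j$ (if any) and installs pigeon $i$ there, and disjointness of the supports means these local modifications do not interfere. A single dimension count on the images $A\delta_1,\ldots,A\delta_n$ then finishes. This is more direct and conceptually cleaner; it also makes transparent that the bound could be tightened to $n$-extensibility, since $n$ vectors in $\bbF_2^k$ are already dependent once $k\le n-1$. The paper's two-phase argument, on the other hand, isolates the ``sparsification'' step as a standalone observation, which may be of independent use when the swap trick is not available.
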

\begin{proof}
    We need the following important property of $\PHP$: if in a $\HOLES_n^m$-proper solution we change any $1$ to $0$, then the resulting assignment remains $\HOLES_n^m$-proper.

    Consider a linear system $Ap=b$ in the variables of $\PHP_n^m$ with $k \le n-1$ lines.
    We start by proving that there is a $\HOLES_n^m$-proper solution that has at most $k$ ones. Let $\sigma$ be a $\HOLES_n^m$-proper solution with the minimum number of ones and define $I \coloneqq \{(s,t) : \sigma(p_{st}) = 1\}$. Assume, for the sake of contradiction, that $\sigma$ has strictly more than $k$ ones. Consider the linear system $A_I p = 0$, where $A_I$ is the matrix consisting of the columns of $A$ with indices from $I$ (all variables outside $I$ are assigned to zeroes). Since $\card{I} > k$, this system has a non-trivial solution $\tau$. By the property above, the solution $\sigma + \tau$ is a $\HOLES_n^m$-proper solution of $Ap=b$. Since $\tau$ is non-trivial, $\sigma+\tau$ has fewer ones than $\sigma$ contradicting the mimimality assumption.

    Consider a pigeon axiom $P_i$.
    If $i \in I$ then $\sigma$ is a $\HOLES_n^m$-proper solution satisfying $P_i$ and we are done.
    Otherwise, since $\sigma$ has at most $k$ ones, it places no more than $k$ pigeons and there exist empty holes $j_1, \ldots, j_{n-k}$. Let $J \coloneqq \{(i,j_1), \ldots, (i, j_{n-k})\}$ and consider the homogenous system $A_{I \cup J} p = 0$. This system is underdetermined, thus it has a non-trivial solution $\rho$. Then $\sigma+\rho$ is a $\HOLES_n^m$-proper solution of $Ap=b$. Furthermore, there exists $(i,j) \in J$ such that $\rho(p_{ij}) = 1$ since otherwise it contradicts the minimality assumption. Hence, $\sigma+\rho$ places the $i$th pigeon to the $j$th empty hole and satisfies $P_i$.
\end{proof}

\begin{theorem}
  The size of any tree-like $\ResXor$ refutation of $\PHP_n^m$ is at least $2^{n-1}$.
  
  The space of any $\ResXor$ refutation of $\PHP_n^m$ is at least $n-1$. Furthermore, the space lower bound is tight.
\end{theorem}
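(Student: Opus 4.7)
The first two claims are immediate consequences of the general framework. By Lemma~\ref{lem:small_system_php}, the formula $\PHP_n^m$ is $(n-1)$-extensible with respect to $\HOLES_n^m$. Feeding this fact into Theorem~\ref{thm:main} yields the tree-like size lower bound of $2^{n-1}$, and feeding it into Theorem~\ref{thm:space} yields the space lower bound of $n-1$. So these two inequalities are obtained by direct invocation of the main theorems, with no additional combinatorial work required.

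The substantive content of the statement is therefore the \emph{tightness} of the space bound, namely the existence of a $\ResXor$ refutation of $\PHP_n^m$ whose configurations contain at most $n-1$ clauses at any moment. My plan here is to exhibit such a refutation explicitly by processing the pigeons one at a time while maintaining a single compact ``summary'' linear clause. After the first $k$ pigeons have been handled, the memory will hold a linear clause expressing that these pigeons occupy $k$ pairwise-distinct holes; to incorporate pigeon $k+1$, one downloads the pigeon axiom $P_{k+1}$ together with the relevant hole axioms $H_{i,k+1,j}$ for $i \le k$, applies the Resolution and Semantic Weakening rules to merge the three objects into an updated summary, and then erases all the auxiliary clauses. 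The key advantage over plain resolution is that a single linear clause in $\ResXor$ can be a disjunction of many $\bbF_2$-linear equalities, so the ``$k$ pigeons, $k$ distinct holes'' assertion collapses into one clause rather than a conjunction of many ordinary ones, which is precisely what saves the extra unit of space.

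The main technical obstacle is the bookkeeping: one must carefully schedule downloads, inferences, and erasures so that the peak clause count, taking into account the running summary, the freshly downloaded axioms, and any intermediate clauses produced before erasure, never exceeds $n-1$ rather than $n$ or $n+1$. The final round deserves particular care, since at that point the summary says that all $n$ holes are occupied, and it must be combined with pigeon axiom $P_m$ and the appropriate hole axioms to derive the empty clause while staying within the $n-1$ budget. Once the construction and the scheduling are in place, verifying correctness is routine, and matching it against the lower bound from Theorem~\ref{thm:space} immediately gives the claimed tightness.
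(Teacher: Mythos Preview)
Your treatment of the two lower bounds is exactly what the paper does: invoke Lemma~\ref{lem:small_system_php} and feed it into Theorems~\ref{thm:main} and~\ref{thm:space}. In fact the paper's displayed proof stops right there; it gives no argument for the tightness clause at all, so on the part the paper actually proves you match it verbatim.

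Your sketch for tightness, however, contains a genuine gap. The negation of a linear clause over $\bbF_2$ is a linear system, so the falsifying set of any single linear clause is an affine subspace of $\{0,1\}^{\text{vars}}$. The predicate ``pigeons $1,\dots,k$ occupy $k$ pairwise distinct holes'' does not have an affine complement. Already for $k=2$: the all-zero assignment, the assignment with only $p_{11}=1$, and the assignment with only $p_{22}=1$ all falsify the predicate (in each case some pigeon is unplaced), yet their $\bbF_2$-sum has $p_{11}=p_{22}=1$ and everything else $0$, which \emph{satisfies} it. Hence no single linear clause can serve as the running summary you describe, and the ``collapses into one clause'' mechanism you identify as the key $\ResXor$ advantage is simply unavailable for this invariant.

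So the plan as written does not go through. A tight upper bound would require a different summary object---one whose falsifying set genuinely is affine---together with the careful download/erase scheduling you allude to; the current proposal supplies the scheduling intuition but not a workable invariant.
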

\begin{proof}
  Follows from application of Lemma~\ref{lem:small_system_php} to Theorems~\ref{thm:main} and~\ref{thm:space}.
\end{proof}

\subsubsection{Ordering principle}\label{ssec:ord}

The ordering principle says that every finite linearly ordered set has a minimum.
We encode it by an unsatisfiable CNF formula $\Ordering_n$ in the variables ${(x_{ij})}_{i \neq j \in [n]}$, where $x_{ij}$ states that the $i$th element is less than the $j$th element (for clarity, we denote it as $i \prec j$). The clauses of $\Ordering_n$ are the following:
\begin{enumerate}
  \item Anti-symmetry: $\neg x_{ij} \lor \neg x_{ji}$, for all $i \neq j \in [n]$.
  \item Totality: $x_{ij} \lor x_{ji}$, for all $i \neq j \in [n]$.
  \item Transitivity: $\neg x_{ij} \lor \neg x_{jk} \lor x_{ik}$, for all distinct $i,j,k \in [n]$.
  \item Non-minimality (i.e. non-existence of the minimum element):
    \[
      \NM_i \coloneqq \bigvee_{j \in [n]\setminus \{i\}} x_{ji}, \text{ for all } i \in [n].
    \]
\end{enumerate}

Define $\ORDER_n$ as the set of clauses that encode the properties of a linear order (types 1--3). The rest of the clauses are the non-minimality clauses $\{\NM_i : i \in [n]\}$.

\begin{lemma}\label{lem:small_system_ord}
    $\Ordering_n$ is $(n-2)$-extensible w.r.t. $\ORDER_n$.
\end{lemma}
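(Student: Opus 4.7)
The proof parallels that of Lemma~\ref{lem:small_system_php}. Let $\sigma$ be an $\ORDER_n$-proper solution of $Av = b$. If $i$ is not the minimum in $\sigma$, then $\sigma$ already satisfies $\NM_i$ and we are done. Otherwise, write the linear order as $i = \pi(1) \prec \pi(2) \prec \cdots \prec \pi(n)$. For each $k \in \{2, \ldots, n\}$, I introduce the \emph{rotation vector} $\rho_k \in \bbF_2^{n(n-1)}$ whose support is the set of $2(k-1)$ variables $\{x_{\pi(k),\pi(t)}, x_{\pi(t),\pi(k)} : t = 1, \ldots, k-1\}$, each with coefficient~$1$. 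Applying $\rho_k$ to $\sigma$ moves $\pi(k)$ to position~$1$ and shifts $\pi(1), \ldots, \pi(k-1)$ one position to the right, placing $i$ at position~$2$.

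The main technical step is to show that for every non-empty $S \subseteq \{2, \ldots, n\}$, the vector $\rho_S = \sum_{k \in S} \rho_k$ is a valid modification: $\sigma + \rho_S$ is a linear order with $i$ at position $|S|+1$. For $a < b$ the pair $\{x_{\pi(a),\pi(b)}, x_{\pi(b),\pi(a)}\}$ is flipped by $\rho_S$ exactly once when $b \in S$ and not at all otherwise, since $\rho_b$ is the only rotation vector that touches this pair. A direct check of transitivity shows that the resulting binary relation is realized by the linear order
\[
    \pi(k_r) \prec \pi(k_{r-1}) \prec \cdots \prec \pi(k_1) \prec i \prec \pi(j_1) \prec \cdots \prec \pi(j_{n-1-r}),
\]
where $S = \{k_1 < \cdots < k_r\}$ and $\{2, \ldots, n\} \setminus S = \{j_1 < \cdots < j_{n-1-r}\}$.

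Any variable simultaneously in $\rho_k$ and $\rho_{k'}$ with $k < k'$ would be indexed by a pair containing both $\pi(k)$ and $\pi(k')$, forcing $k' < k$, a contradiction. Hence the rotation vectors have pairwise disjoint supports and span an $(n-1)$-dimensional subspace $W \subseteq \bbF_2^{n(n-1)}$. Since $A$ has fewer than $n-2$ equations, $\ker A \cap W$ has dimension at least $(n-1) - (n-3) = 2$, so it contains a non-zero vector $v = \rho_S$ for some non-empty $S$. By the claim, $\sigma + v$ is an $\ORDER_n$-proper solution of $Av = b$ satisfying $\NM_i$, completing the proof.

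The main obstacle is identifying the ``right'' family of modification vectors. A naive attempt with the ``shift-$i$'' vectors $g_k = \sum_{t=2}^{k} f_t$, where $f_t$ flips $x_{i,\pi(t)}, x_{\pi(t),i}$, falls short: although the $f_t$'s also span an $(n-1)$-dimensional subspace, only the $n-1$ specific prefix sums $g_2, \ldots, g_n$ (out of $2^{n-1}$ subset sums) correspond to valid linear orders, so a generic kernel element of $A$ restricted to $\mathrm{span}(f_2,\ldots,f_n)$ need not correspond to any linear order. The crucial feature of the rotation vectors is that \emph{every} element of their span corresponds to a valid linear order with $i$ not at the minimum, which is what allows the dimensional argument to go through cleanly.
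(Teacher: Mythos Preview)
Your proof is correct, and the approach is genuinely different from the paper's.

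The paper argues by induction on $n$: after normalizing so that $\sigma$ encodes $1\prec 2\prec\cdots\prec n$, it locates an ``innermost'' pair $(i,j)$ such that $x_{ij}$ appears in the system but no $x_{kl}$ with $i\le k<l\le j$, $(k,l)\neq(i,j)$, does. It then isolates $x_{ij}$ in a single equation $f(x)=\alpha$, glues $i$ and $j$ together in the remaining system $Bx=c$, applies the induction hypothesis to the resulting ordering problem on $n-1$ elements, and finally unglues by inserting $j$ immediately before or after $i$, choosing the side that satisfies $f(x)=\alpha$. A separate direct argument handles the case where only variables $x_{1,\cdot}$ survive.

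Your argument is a direct linear-algebraic construction in the spirit of Lemma~\ref{lem:small_system_php}: the rotation vectors $\rho_k$ are explicit, have pairwise disjoint supports, and---crucially---\emph{every} non-zero vector in their span transforms the given linear order into another linear order in which $i$ is no longer minimal. A dimension count then produces a non-trivial element of $\ker A\cap W$. This avoids induction and the gluing machinery entirely. In fact your bound $\dim(\ker A\cap W)\ge 2$ is more than you need; with at most $n-2$ equations you would still get dimension at least $1$, so your argument actually yields $(n-1)$-extensibility, one better than the stated lemma. The paper's inductive argument does not obviously give this improvement. On the other hand, the paper's gluing technique is precisely what is reused for the dense linear ordering principle in Section~\ref{ssec:dlo}, where the extra witness variables $z_{ikj}$ make a pure dimension-counting argument less immediate.
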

\begin{proof}
    We prove the statement by induction on $n$.

    Let $Ax=b$ be a linear system with less than $n-2$ lines with an $\ORDER_n$-proper soluton $\sigma$.
    $\ORDER_n$-proper solutions of $Ax=b$ encode linear orders. Since apart from the minimum, every element of a linear order has a predecessor, any $\ORDER_n$-proper solution falsifies exactly one non-minimality clause that encodes the non-minimality of the minimum. 

    Without loss of generality, we can assume that $\sigma$ is the ordering of the set $[n]$ and encodes the order $1 \prec 2 \prec \cdots \prec n$ (we can simply rename the indices).
    We also replace all occurrences of $x_{ji}$, where $j \succ i$, with $1-x_{ij}$. Since we are only interested in $\ORDER_n$-proper solutions, this operation is safe.

    The base case $n=2$ is trivial. The system $Ax=b$ is empty, thus, the encoding of every linear order on the set of $n$ elements is an $\ORDER_n$-proper solution of the system. Clearly, for $n>1$ there are at least two of them with different minimum elements.

    For the inductive step
    we choose $i$ and $j$, where $i \prec j$, satisfying the following properties:
    \begin{enumerate}
        \item $x_{ij}$ appears in some equation in $Ax=b$ with a non-zero coefficient.
        \item For all $k, l$ such that $i \preceq k \prec l \preceq j$ and $(k,l) \neq (i,j)$, $x_{kl}$ does not appear in any line of $Ax=b$ with a non-zero coefficient.
        \item $i$ is the largest possible: if $i'$ and $j'$ satisfy conditions 1--2, then either $(i',j') = (i,j)$ or $i \succ i'$.
    \end{enumerate}

    Note that such $i$ and $j$ exist since any $x_{ij}$ in the system with the smallest value of $j-i$ satisfies conditions 1--2.

    Suppose that $i=1$. By construction (property 3), only variables $y = (x_{12}, \ldots, x_{1n})$ appear in the system $Ax=b$.
    Hence, it can be written as $A'y = b$, where $A'$ consists of the columns of $A$ that correspond to $y$. The new system has $n-1$ variables. Let $\sigma'$ be the projection of $\sigma$ on $y$. Since the element $1$ is the minimum in $\sigma$, $\sigma' = (1, \ldots, 1)$.
    The number of equations in $A'y=b$ is at most $n-2$ that is fewer than the number of variables. Hence, the system must have another solution $\tau' \neq (1, \ldots, 1)$. Since these variables only encode the position of the element $1$ in the order, this partial solution can be easily extended to an $\ORDER_n$-proper solution of $Ax=b$, where the position of the element $1$ is fixed according to $\tau'$. Let $\tau$ be such an extension. Since there are at least one zero in $\tau'$, the element $1$ in $\tau$ cannot be the minimum.

    Now consider the case $i>1$. In this case we ``glue'' the elements $i$ and $j$ together and then reduce the problem to the induction hypothesis.

    We start by changing the system such that exactly one of its equations contains the variable $x_{ij}$. Consider any equation of $Ax=b$ with $x_{ij}$ and add it to the other ones that also contain $x_{ij}$. We denote this equation with $x_{ij}$ as $f(x)=\alpha$ and let $Bx=c$ be the rest of the system.

    We ``glue'' elements $i$ and $j$ together in $Bx=c$ by syntactically replacing all the occurrences of $j$ by $i$. This operation is correct since there are no $x_{kl}$ for all $i \preceq k \prec l \preceq j$ in $Bx=c$ by the choice of $i,j$ and the fact that $Bx=c$ does not contain $x_{ij}$. Let $B'x' = c$ be the system with $i$ and $j$ ``glued'' together. It has an $\ORDER_{n-1}$-proper solution $\sigma'$ that encodes the ordering $1 \prec \cdots \prec j-1 \prec j+1 \prec \cdots \prec n$ of the set $[n] \setminus \{j\}$.

    The system $B'x'=c$ has at most $n-3$ equations, depends on the variables corresponding to the order on the $(n-1)$-element set $[n] \setminus \{j\}$, and has an $\ORDER_{n-1}$-proper solution $\sigma'$. Therefore, we can apply the induction hypothesis and get a different $\ORDER_{n-1}$-proper solution $\tau'$ corresponding to the order $a_1 \prec \cdots \prec a_{n-1}$ with $a_1 \neq 1$.

    Let $i=a_k$, $k \in [n-1]$. We consider two orders
    \begin{equation}
    a_1 \prec \cdots \prec a_{k-1} \prec i \prec j \prec a_{k+1} \prec \cdots, a_{n-1}
    \end{equation}
    and
    \begin{equation}
    a_1 \prec \cdots \prec a_{k-1} \prec j \prec i \prec a_{k+1} \prec \cdots, a_{n-1}.
    \end{equation}
    They extend $\tau'$ to $[n]$ by placing $j$ immediately after or before $i$. Let $\tau^{(1)}$ and $\tau^{(2)}$ be their encodings as $\ORDER_n$-proper solutions of $Bx=c$. Exactly one of them satisfies the equation $f(x) = \alpha$ since $x_{ij}$ appears in it. Therefore, there exists $t \in {\{1,2\}}$ such that $\tau^{(t)}$ is an $\ORDER_n$-proper solution of $Ax=b$, and $\sigma$ and $\tau^{(t)}$ have different minimum elements in their corresponding orders.
\end{proof}

\begin{theorem}\label{thm:ordering_bound}
    The size of any tree-like $\ResXor$ refutation of $\Ordering_n$ is at least $2^{n-2}$.
    The space of any $\ResXor$ refutation of $\Ordering_n$ is at least $n-2$.
\end{theorem}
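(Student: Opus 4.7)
The plan is to observe that this statement is a direct corollary of the extensibility lemma we have just established, combined with the two general lower bound theorems from Section~\ref{sec:framework}. No fresh combinatorial work is required at this stage; the entire content of the proof is a citation chain.

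First I would invoke Lemma~\ref{lem:small_system_ord}, which says precisely that $\Ordering_n$ is $(n-2)$-extensible with respect to the subset $\ORDER_n \subseteq \Ordering_n$ consisting of the anti-symmetry, totality, and transitivity axioms. This supplies the hypotheses of both Theorem~\ref{thm:main} and Theorem~\ref{thm:space} with the parameter $m = n-2$ and the distinguished subset $F = \ORDER_n$.

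Next I would apply Theorem~\ref{thm:main} to conclude that any tree-like $\ResXor$ refutation of $\Ordering_n$ has size at least $2^{n-2}$, and separately apply Theorem~\ref{thm:space} to conclude that any $\ResXor$ refutation of $\Ordering_n$ (not necessarily tree-like) has space at least $n-2$. These are exactly the two statements of the theorem.

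There is no real obstacle here: the heavy lifting was carried out in Lemma~\ref{lem:small_system_ord}, where the inductive ``gluing'' argument on $i$ and $j$ produced, for any linear system with fewer than $n-2$ equations that already admits an $\ORDER_n$-proper solution, a second $\ORDER_n$-proper solution with a different minimum (and hence one satisfying any prescribed non-minimality clause $\NM_i$). Once that combinatorial fact is in place, the size and space bounds fall out by direct substitution into the two general theorems.
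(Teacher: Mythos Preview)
Your proposal is correct and matches the paper's own proof exactly: the paper simply states that the theorem follows from applying Lemma~\ref{lem:small_system_ord} to Theorems~\ref{thm:main} and~\ref{thm:space}. Your additional explanatory remarks are accurate elaborations of this citation chain.
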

\begin{proof}
    Follows from application of Lemma~\ref{lem:small_system_ord} to Theorems~\ref{thm:main} and~\ref{thm:space}.
\end{proof}

\subsubsection{Dense Linear Ordering principle}\label{ssec:dlo}

The dense linear ordering principle states that a finite linearly ordered set cannot be dense. It was originally suggested by Urquhart and used by Riis in \cite{DBLP:journals/cc/Riis01}. We encode it by an unsatisfiable CNF formula $\DLO_n$ in the variables ${(x_{ij})}_{i \neq j \in [n]}$, where $x_{ij}$ states that the $i$th element is less than the $j$th element ($i \prec j$), and ${(z_{ikj})}_{i \neq j \neq k \in [n]}$, where $z_{ikj}$ encodes that $k$ is a ``witness'' of $i \prec j$, i.e. if $i \prec j$ and $z_{ikj}=1$ then $i \prec k \prec j$. The clauses of $\DLO_n$ are the following:
\begin{enumerate}
    \item Anti-symmetry: $\neg x_{ij} \lor \neg x_{ji}$ for all $i \neq j \in [n]$.
    \item Totality: $x_{ij} \lor x_{ji}$ for all $i \neq j \in [n]$.
    \item Transitivity: $\neg x_{ij} \lor \neg x_{jk} \lor x_{ik}$ for all distinct $i,j,k \in [n]$.
    \item Semantics for $z$'s: $\neg z_{ikj} \lor x_{ik}$ and $\neg z_{ikj} \lor x_{kj}$ for all distinct $i,j,k \in [n]$.
    \item Density (if $i \prec j$ then there must be some $k$ between them):
        \[
            D_{ij} = \neg x_{ij} \lor (\bigvee_{k \in [n] \setminus {\{i,j\}}} z_{ikj}) \text{ for all } i \neq j \in [n].
        \]
\end{enumerate}

Similarly to the ordering principle, we define $\WORDER_n$ (``W'' here stands for ``witness'') as the set of clauses that encode the properties of a linear order (types 1--3) together with the clauses describing the semantics for $z$'s (type 4). The remaining clauses are the density clauses ${\{D_{ij}\}}_{i \neq j \in [n]}$.

\begin{lemma}\label{lem:small_system_dlo}
    $\DLO_n$ is $\left\lfloor \frac{n-3}{3} \right\rfloor$-extensible w.r.t. $\WORDER_n$.
\end{lemma}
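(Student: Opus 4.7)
My plan is to adapt the inductive strategy of Lemma~\ref{lem:small_system_ord} (ordering principle), modifying it so that each inductive step reduces $n$ by three rather than one; this is what produces the factor $1/3$ in the extensibility bound. As in the ordering case, if the given $\WORDER_n$-proper solution $\sigma$ of $Ax = b$ already satisfies $D_{ij}$ there is nothing to prove; otherwise $i \prec_\sigma j$ and $\sigma(z_{ikj}) = 0$ for every $k \ne i, j$. The base cases $n \le 5$ are vacuous since then $m = \lfloor (n-3)/3 \rfloor = 0$.

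For the inductive step, the plan is, by a sequence of row operations on $Ax = b$, to identify a triple $\{a, b, c\} \subseteq [n] \setminus \{i, j\}$ and a distinguished equation such that (i) no equation other than the distinguished one has support in variables touching $\{a, b, c\}$, and (ii) the distinguished equation's support is contained in variables whose indices lie in $\{a, b, c, i, j\}$. The slack $3(m-1) \le n - 6$ coming from $e < m$ ensures the existence of such a triple: the other at most $m - 1$ equations can collectively claim at most $3(m-1) \le n - 6$ elements via their pivot variables, leaving at least four elements outside $\{i, j\}$ from which we select $\{a, b, c\}$. Dropping the distinguished equation and restricting both $\DLO_n$ and the system to the universe $[n] \setminus \{a, b, c\}$ produces a reduced instance: a $\WORDER_{n-3}$-proper solution (the projection of $\sigma$) of a linear system with at most $e - 1$ equations, and since $e - 1 < \lfloor ((n-3)-3)/3 \rfloor$ the inductive hypothesis yields a $\WORDER_{n-3}$-proper solution $\tau'$ of the reduced system satisfying the restricted density clause $D_{ij}$.

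We then extend $\tau'$ back to a $\WORDER_n$-proper assignment $\tau$ on $[n]$ by reinserting $a, b, c$ into the order of $\tau'$ and choosing values for the $z$-variables involving $\{a, b, c\}$. Because $a, b, c \ne i, j$, the $D_{ij}$-witness inherited from $\tau'$ (either $x_{ij} = 0$ or some $z_{ikj} = 1$ with $k \in [n] \setminus \{a, b, c\}$) survives the extension. The only remaining constraint is the dropped equation; the six relative orderings of $\{a, b, c\}$ together with the freedom in the $z$-variables supported on $\{a, b, c, i, j\}$ provide enough combinatorial room to satisfy any single $\bbF_2$-equation in these variables.

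The main obstacle is the isolation step: showing that, for some triple $\{a, b, c\} \subseteq [n] \setminus \{i, j\}$, the interaction of the row space of $A$ with the variables touching $\{a, b, c\}$ can be concentrated into a single row whose support lies in $\{a, b, c, i, j\}$. This is the direct analogue of the selection of a minimal interval $(i, j)$ via properties 1--3 in Lemma~\ref{lem:small_system_ord}, but it is technically more subtle because of the three-indexed $z$-variables in $\DLO_n$: a single equation may touch many elements even after row reductions, so a careful pivoting-and-counting argument---exploiting the slack $3e \le n - 6$---will be required to locate the triple with the desired isolation property and to confirm that the six orderings plus $z$-freedom indeed suffice to hit an arbitrary target value of the dropped equation.
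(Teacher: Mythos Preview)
Your inductive scheme has a real gap at the isolation step, and it is not a technicality that a ``careful pivoting-and-counting argument'' will close: property~(i)---that after row operations all but one equation avoid every variable touching $\{a,b,c\}$---can fail for \emph{every} admissible triple. The heuristic you offer (``the other equations claim at most $3(m-1)$ elements via their pivot variables'') is invalid because row operations control only the location of pivots, not the full element-support of a row; a single equation built from a few $z$-variables can touch $\Theta(n)$ elements regardless of pivoting. Concretely, take $n=12$ (so $m=3$), the target clause $D_{9,12}$, and the two-equation system
\[
z_{1,2,3}+z_{4,5,6}+z_{7,8,10}=0,\qquad z_{1,2,4}+z_{3,5,7}+z_{6,8,11}=0,
\]
which has the $\WORDER_{12}$-proper solution ``any linear order with all $z$'s equal to $0$''. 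Each of the three nonzero vectors in the row space has element-support containing $\{1,\dots,8\}$; the complements of these supports inside $[12]\setminus\{9,12\}=\{1,\dots,8,10,11\}$ have size at most~$1$. Hence every triple $\{a,b,c\}\subseteq[12]\setminus\{9,12\}$ meets the element-support of \emph{every} nonzero row, so the projection of the row space to the columns touching $\{a,b,c\}$ is always $2$-dimensional and cannot be concentrated into a single equation. Your property~(ii) is stronger still and already fails for a single equation such as $z_{1,2,3}+z_{4,5,6}+z_{7,8,10}=0$, whose support touches nine elements and cannot fit inside any five-element index set $\{a,b,c,i,j\}$.

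The paper sidesteps this obstruction by a different decomposition. Rather than peeling off three elements per step, it first \emph{normalizes the $z$-variables}: take a $\WORDER_n$-proper solution $\sigma$ with the fewest $1$'s among the $z_{ikj}$ (a minimality argument bounds this by $M=\lfloor(n-3)/3\rfloor$), substitute those $z$-values into the system, and add, for each $z_{ikj}$ with $\sigma(z_{ikj})=1$, the two witness equations $x_{ik}=1$ and $x_{kj}=1$. The resulting system has at most $3M\le n-3$ equations and admits a proper solution with \emph{all} $z$'s equal to~$0$. From this special form an auxiliary proposition (Proposition~\ref{lem:dlo_helper}) runs an induction on $n$ with step~$1$, using exactly the ``glue two adjacent elements'' move from Lemma~\ref{lem:small_system_ord}; the factor $1/3$ therefore comes from the normalization (each nonzero $z$ costs two extra equations), not from the step size of the induction.
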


We need the following restriction of this lemma.
\begin{proposition}\label{lem:dlo_helper}
    Let $Av=b$ be a linear system in the variables $v = (x_{ij}) \cup (z_{ikj})$ with at most $n-3$ equations that has a $\WORDER_n$-proper solution $\sigma$, which sets every $z_{ikj}$ to zero. Then for every density clause $D_{st}$ there is a $\WORDER_n$-proper solution $\tau$ of $Av=b$ that satisfies $D_{st}$ and $\tau(z_{ikj}) = 1$ only if $i=s$ and $j=t$.
\end{proposition}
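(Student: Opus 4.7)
The plan is to split into cases based on $\sigma(x_{st})$. If $\sigma(x_{st}) = 0$, i.e., $t \prec_\sigma s$ in the linear order encoded by $\sigma$, then $\sigma$ already falsifies $x_{st}$ and thus satisfies $D_{st}$ via its literal $\neg x_{st}$. Since $\sigma$ has all $z$-variables at zero, taking $\tau = \sigma$ yields a $\WORDER_n$-proper solution of $Av = b$ meeting the required $z$-restriction, finishing this case.

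The substantive case is $\sigma(x_{st}) = 1$, so $s \prec_\sigma t$. Here I would try to produce a witness $z_{skt} = 1$ by a kernel perturbation of $\sigma$. The key linear-algebraic step is that the $n - 2$ columns of $A$ corresponding to the witness variables $\{z_{skt}\}_{k \in [n] \setminus \{s, t\}}$ live in $\bbF_2^r$ with $r \leq n - 3$, so they are linearly dependent. Any non-trivial dependency $\sum_{k \in W} u_k = 0$ yields a modified assignment $\tau = \sigma + \sum_{k \in W} e_{z_{skt}}$ with $A\tau = b$. If additionally $W \subseteq M := \{k : s \prec_\sigma k \prec_\sigma t\}$, then $\tau$ satisfies the type-4 clauses $\neg z_{skt} \lor x_{sk}$ and $\neg z_{skt} \lor x_{kt}$ for each $k \in W$ (because $\tau$ agrees with $\sigma$ on $x$ and each $k \in W$ lies strictly between $s$ and $t$). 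Hence $\tau$ is $\WORDER_n$-proper, satisfies $D_{st}$, and has $\tau(z_{ikj}) = 1$ only for $(i,j) = (s,t)$ by construction.

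The main obstacle is the sub-case where no non-trivial witness-column dependency is supported in $M$. Handling it requires jointly perturbing the $x$-part of $\sigma$: the plan is to augment the columns under consideration with those of $A$ corresponding to $x$-variables that implement a combinatorially valid rearrangement of the order, such as a transposition of $s$ and $t$ (producing $\tau(x_{st}) = 0$, which satisfies $D_{st}$ by the first literal) or a single-element relocation that moves some element of $[n] \setminus (M \cup \{s, t\})$ into the interval between $s$ and $t$ (enlarging $M$). A parameter count, using the slack $n - 2 - r \geq 1$ together with the additional $x$-columns against at most $n - 3$ equations, should yield a simultaneous modification of the $x$- and $z$-parts that preserves $A\tau = b$ and produces the desired $\tau$. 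The crux, and the hardest step, is that arbitrary $\bbF_2$-combinations of $x$-columns do not correspond to valid permutations; hence the plan is to draw $x$-perturbations only from an explicit sparse family of combinatorially valid moves (transpositions and single-element re-insertions) whose $\bbF_2$-span stays in the image of genuine linear orders, and to argue via dimension that at least one element of this controlled family lies in the relevant kernel.
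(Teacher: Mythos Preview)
Your first case ($\sigma(x_{st})=0$) and the sub-case where a $z$-column dependency is supported in $M$ are correct and clean. The gap is in the remaining sub-case, and it is genuine.

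The proposed fix relies on the claim that one can choose a family of ``combinatorially valid moves'' (the $s,t$-swap together with single-element re-insertions into the interval $(s,t)$) whose $\bbF_2$-span stays inside encodings of genuine linear orders. This is false for the family you name. With $\sigma$ the identity order and $s,t$ adjacent, the swap $v_1$ flips only $x_{st}$, while the re-insertion $v_2$ of the element $t+1$ flips $x_{t,t+1}$ and sets $z_{s,t+1,t}$. Their sum $v_1+v_2$ sets $x_{st}=0$, $x_{t,t+1}=0$ and leaves $x_{s,t+1}=1$, producing the cycle $t\prec s\prec t+1\prec t$; so $v_1+v_2$ is not a valid order. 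Once the span escapes valid orders, a dimension/pigeonhole argument only hands you some nonzero element of the span in $\ker A$, not an individual move, and that element need not be $\WORDER_n$-proper. You never get to conclude that a \emph{single} move lies in the kernel from $n-1$ vectors landing in $\bbF_2^{n-3}$.

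The paper sidesteps this entirely with a different mechanism: induction on $n$ via the ``gluing'' trick already used for $\Ordering_n$. One locates a minimal-interval variable $x_{ij}$ occurring in the system; if such a pair exists entirely to the right of $t$ (or entirely to the left of $s$), one isolates $x_{ij}$ in a single equation, deletes that equation, syntactically identifies $i$ and $j$ to drop to $n-1$ elements and $\le n-4$ equations, applies the induction hypothesis, and then re-inserts $j$ next to $i$ on whichever side satisfies the isolated equation. If no such pair exists on the right of $t$, then no $x$-variable comparing $t$ to anything larger appears in the system, so $t$ can be moved to the maximum without touching $Av=b$; symmetrically $s$ can be moved to the minimum. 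In that terminal situation $M=[n]\setminus\{s,t\}$, and now your own easy kernel argument on the $n-2$ columns for $z_{skt}$ finishes, since every $k$ is a legitimate witness. So the paper's reduction manufactures precisely the configuration in which your direct argument succeeds; the inductive gluing is the missing idea.
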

\begin{proof}
    Without loss of generality, we can assume that $\sigma$ is the ordering of the set $[n]$ and encodes the order $1 \prec 2 \prec \cdots \prec n$ (we can simply rename the indices).
    We replace all occurrences of $x_{ji}$, where $j \succ i$ in the system with $1-x_{ij}$, as we did in the proof for the ordering principle.

    Let $D_{st}$ be a density clause of $\phi$ that is falsified by $\sigma$.

    We prove the statement by induction on $n$.

    The base case $n=3$ is trivial since the system is empty: we can either choose an order with $s \succ t$ or choose the order $s \prec w \prec t$, where $w \neq s$ and $w \neq t$, and set $z_{swt}=1$.

    For the inductive step
    we choose $i$ and $j$, where $i \prec j$, satisfying the following properties:
    \begin{enumerate}
        \item $x_{ij}$ appears in $Av=b$ with a non-zero coefficient.
        \item For all $k, l$ such that $i \preceq k \prec l \preceq j$ and $(k,l) \neq (i,j)$, $x_{kl}$ does not appear in any line of $Av=b$ with a non-zero coefficient.
    \end{enumerate}

    If there are $i$ and $j$ satisfying this conditions and $t \preceq i$, then we ``glue'' $i$ and $j$ into $i$ as we did in the proof of Lemma~\ref{lem:small_system_ord}: we can leave $x_{ij}$ in exactly one equation $f(v) = \alpha$ of $Av=b$ by adding some equations and consider the linear system $Bv=c$ excluding this equation. Then we 
    syntactically replace each occurrence of $j$ in $Bv=c$ with $i$ and denote the resulting system as $B'v'=c$. The system $B'v'=c$ has a $\WORDER_n$-proper solution $\tau$, which satisfies $D_{st}$, by the induction hypothesis. Now we can extend $\tau$ by adding $j$ and placing immediately before of after $i$. Exactly one of such extensions satisfies $f(v)=\alpha$ and we are done.

    Otherwise, $t$ is not being compared with any other $w$, where $w \succ t$, and we can modify $\sigma$ so that $t$ is the maximum element in the order encoded by $\sigma$. Consider an assignment $\sigma'$ such that $\sigma'(x_{wt}=1)$ for every $w \succ t$ and $\sigma'$ coincides with $\sigma$ on the other variables. Since for all $w \succ t$, $x_{wt}$ does not appear in the system, $\sigma'$ is a solution of $Av=b$ as well.

    Similarly, we try to choose a pair $(i,j)$ ``to the left'' of $s$. If there is $i$ and $j$ satisfying aforementioned properties and $j \preceq s$, then we can do the similar reasoning: we ``glue'' $i$ and $j$ into $j$ and apply the induction hypothesis. If we cannot choose such a pair, we can modify the solution such that $s$ is the minimum.

    For the remaining case, we showed that there is another solution $\sigma'$ that has $s$ as the minimum and $t$ as the maximum. Hence, any other element can be taken as a witness of $s \prec t$. Assign the value from $\sigma$ for every variable except for $z_{skt}$ for all $k$. The resulting system has $n-2$ variables, at most $n-3$ equations, and has an all-zero solution. Thus, it has at least one other solution with some $z_{skt}$ set to $1$. We can change $\sigma'$ according to this solution and get the resulting $\WORDER_n$-proper solution $\tau$, which has $\tau(z_{skt})=1$ for some $k$.
\end{proof}

Now we are ready to prove Lemma~\ref{lem:small_system_dlo}.

\begin{proof}[Proof of Lemma~\ref{lem:small_system_dlo}]
    Consider a linear system $Av=b$ with at most $M \coloneq \left\lfloor \frac{n-3}{3} \right\rfloor$ equations that has a $\WORDER_n$-proper solution.
    Let $\sigma$ be a $\WORDER_n$-proper solution of the system with the minimum number of ones assigned to variables ${\{z_{ikj}\}}_{i,k,j}$. We claim that $\sigma$ sets at most $M$ ones to these variables. Assume the opposite and fix all variables in $Av=b$ according to $\sigma$ except for $z_{ikj}$ satisfying $\sigma(z_{ikj})=1$. It has at most $M$ lines, hence there is a solution that has at most $M$ ones. Here we stress the fact that if $\sigma$ is an $F$-proper solution with $\sigma(z_{ijk})=1$ for some $i,j,k$, then we can set $\sigma(z_{ijk})=0$ and it will remain $F$-proper, i.e. we can safely change the value of any variable $z_{ikj}$ from $1$ to $0$.

    Let us define another system $Cv=d$.
    It contains all equations from $Av=b$ with $z_{ikj}$ replaced by $\sigma(z_{ikj})$. In addition, for every $z_{ikj}$ satisfying $\sigma(z_{ikk})=1$, it contains two additional equations $x_{ik}=1$ and $x_{kj}=1$.
    This system has as most $3M \le n-3$ equations and has an $F$-proper solution $\sigma'$ that coincides with $\sigma$ on $x_{ij}$ and assigns every $z_{ikj}$ to $0$.
    We can apply Proposition~\ref{lem:dlo_helper} and get an $F$-proper solution $\tau$ of $Cv=d$ that satisfies $D_{st}$. Since for every $z_{ikj}$ in $Av=b$ with $\sigma(z_{ikj})=1$ we have $\tau(x_{ik})=1$ and $\tau(x_{kj})=1$, we can set $\tau(z_{ikj})\coloneqq 1$ and it will remain $F$-proper. Thus, we can change $\tau$ such that it is an $F$-proper solution of $Av=b$ and satisfies $D_{st}$.
\end{proof}

\begin{theorem}
    The size of any tree-like $\ResXor$ refutation of $\DLO_n$ is at least $2^{\left\lfloor \frac{n-3}{3} \right\rfloor}$.
    The space of any $\ResXor$ refutation of $\DLO_n$ is at least $\left\lfloor \frac{n-3}{3} \right\rfloor$.
\end{theorem}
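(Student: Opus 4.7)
The plan is to invoke the two general lower-bound theorems established earlier in the paper as black boxes. Since Lemma~\ref{lem:small_system_dlo} establishes that $\DLO_n$ is $m$-extensible with respect to $\WORDER_n$ for $m = \left\lfloor \frac{n-3}{3} \right\rfloor$, both desired bounds follow from plugging these parameters into Theorem~\ref{thm:main} (for tree-like size) and Theorem~\ref{thm:space} (for space), exactly as was done for $\PHP_n^m$ and $\Ordering_n$ in the two preceding corollaries.

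More concretely, I would first restate what Lemma~\ref{lem:small_system_dlo} gives: there is a subset $F = \WORDER_n$ of the clauses of $\DLO_n$ and a value $m = \left\lfloor \frac{n-3}{3} \right\rfloor$ such that the hypotheses of both Theorem~\ref{thm:main} and Theorem~\ref{thm:space} are satisfied. Applying Theorem~\ref{thm:main} then yields that every tree-like $\ResXor$ refutation of $\DLO_n$ has size at least $2^{\left\lfloor (n-3)/3 \right\rfloor}$, and applying Theorem~\ref{thm:space} yields that every $\ResXor$ refutation of $\DLO_n$ requires space at least $\left\lfloor (n-3)/3 \right\rfloor$.

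There is no real obstacle here: all the combinatorial work — constructing the linear order with the witness variables, handling the three-fold blow-up coming from the auxiliary equations $x_{ik}=1$ and $x_{kj}=1$ introduced to absorb the $z$-variables, and the induction on $n$ via the gluing argument — has already been carried out in Proposition~\ref{lem:dlo_helper} and Lemma~\ref{lem:small_system_dlo}. Thus the proof of the theorem is a one-line application, and mirrors verbatim the structure of the corresponding corollary for $\Ordering_n$ in Theorem~\ref{thm:ordering_bound}.
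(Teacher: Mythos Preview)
Your proposal is correct and matches the paper's own proof essentially verbatim: the paper simply states that the theorem follows from applying Lemma~\ref{lem:small_system_dlo} to Theorems~\ref{thm:main} and~\ref{thm:space}, exactly as you outline.
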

\begin{proof}
    Follows from application of Lemma~\ref{lem:small_system_dlo} to Theorems~\ref{thm:main} and~\ref{thm:space}..
\end{proof}

We emphasize here that this result is new and to the moment cannot be obtained using different techniques.

\section{Width and Space}\label{sec:space_width}

Spoiler--Duplicator games were used by Atserias and Dalmau in~\cite{Atserias2008} for proving resolution space lower bounds in terms of width. In this section we extend these games to the $\ResXor$ proof system and show that similar lower bounds hold for it.

We define the width of a linear clause similarly to the resolution width as the number of linear literals in the clause. The width of a refutation of an unsatisfiable CNF $\phi$ is the maximum width of clauses in the refutation. The width of $\phi$, denoted as $\Width(\phi)$, is the minimum width of its refutations.

The following definition is an extension of Definition~2 from \cite{Atserias2008} to the $\ResXor$ proof system. Let $\phi$ be a linear CNF formula and $\calH$ a non-empty family of linear systems over $\bbF_2$ in the variables of $\phi$. We say that $\calH$ is a \emph{$k$-winning strategy} if it satisfies the following properties:
\begin{enumerate}
    \item For every $\calF \in \calH$, $\calF$ contains at most $k$ equations (denoted as $\card{\calF} \leq k$).
    \item For every $\calF \in \calH$ and every linear clause $C$ in $\phi$, there exists a solution of $\calF$ that satisfies $C$.
    \item If $\calG$ is a linear system satisfying $\calF \vDash \calG$ for some $\calF \in \calH$ and $\card{\calG} \leq k$, then $\calG \in \calH$.
    \item For every $\calF \in \calH$ with $\card{\calF} < k$ and for every linear form $f$, there exists a constant $a \in \bbF_2$ such that $\calF \land (f = a) \in \calH$.
\end{enumerate}
\label{ws}

\begin{lemma}\label{lem:sw_l2}
    Let $\phi$ be a linear $r$-CNF formula. Suppose that there are no $\ResXor$ refutations of $\phi$ of width $k\geq r$. Then there is a $k$-winning strategy for $\phi$.
\end{lemma}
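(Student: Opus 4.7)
The plan is to define the family $\calH$ as follows: a linear system $\calF$ belongs to $\calH$ if and only if $|\calF| \le k$ and no linear clause $C$ derivable from $\phi$ in a $\ResXor$-derivation whose intermediate clauses all have width at most $k$ satisfies $\calF \vDash \neg C$. Since $\phi$ has no width-$k$ refutation, the empty clause is not among these derivable clauses, so at least the empty system lies in $\calH$ and $\calH$ is nonempty.

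Properties (1), (2), and (3) of a $k$-winning strategy I would verify directly. Property (1) is built into the definition. For (2), every axiom $C \in \phi$ has width at most $r \le k$ and thus trivially occurs as a leaf in the collection of width-$k$-derivable clauses; membership $\calF \in \calH$ then rules out $\calF \vDash \neg C$, yielding a solution of $\calF$ that satisfies $C$. For (3), if $\calF \vDash \calG$ and some $C$ witnessed $\calG \vDash \neg C$, the same $C$ would witness $\calF \vDash \neg C$, contradicting $\calF \in \calH$; hence $\calG \in \calH$.

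Property (4) is the heart of the argument and the step I expect to be the main obstacle. Assume, for contradiction, $\calF \in \calH$ with $|\calF| < k$ and a linear form $f$ such that neither $\calF \cup \{f{=}0\}$ nor $\calF \cup \{f{=}1\}$ lies in $\calH$. Then there exist linear clauses $C_0, C_1$, each derivable from $\phi$ in width $\le k$, with $\calF \cup \{f{=}a\} \vDash \neg C_a$ for $a \in \{0,1\}$. My task is to produce a new clause $D$, also derivable in width $\le k$ from $\phi$, with $\calF \vDash \neg D$, contradicting $\calF \in \calH$. The intended construction is to weaken $C_0$ by appending the literal $(f{=}1)$ and $C_1$ by appending $(f{=}0)$ and then apply the Resolution rule with pivot $f$, obtaining a clause of the form $C_0 \lor C_1$ (or a refinement of it). The delicate point---and the main technical hurdle---is to simultaneously respect the width bound $k$ and to ensure the resolvent is $\calF$-falsified, since the naive construction has width up to $2k$ and need not be falsified by $\calF$. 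To surmount this, I would analyze each $C_a$ modulo $\calF$: each literal is either already falsified by $\calF$ alone (its negation lies in the affine span of $\calF$) or is semantically equivalent to $(f{=}1{-}a)$ modulo $\calF$. If some $C_a$ contains only literals of the first kind, then $\calF \vDash \neg C_a$ outright, contradicting $\calF \in \calH$. Otherwise the literals of the second kind in $C_0$ and $C_1$ pair up through their $\bbF_2$-linear sums to yield further $\calF$-falsified equations, collapsing the resolvent down to a clause of width at most $k$ all of whose literals are $\calF$-falsified. Executing this collapse inside a genuine width-$k$ $\ResXor$ derivation, rather than informally modding out by $\calF$, is where the hypothesis $k \ge r$ plays its role.
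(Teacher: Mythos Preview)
Your definition of $\calH$ and your treatment of properties (1)--(3) match the paper's proof exactly.

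For property (4) you have the right skeleton but you are making it much harder than necessary, because you are implicitly treating the weakening rule as the \emph{syntactic} operation ``append a literal''. In $\ResXor$ the weakening rule is \emph{semantic}: from $C$ you may derive any $D$ with $C \vDash D$. The paper exploits this directly. From $\calF \land (f{=}0) \vDash \neg C_0$ one gets, by contraposition, $C_0 \vDash \neg\calF \lor (f{=}1)$. Hence $\neg\calF \lor (f{=}1)$ is derivable from $C_0$ by a single application of semantic weakening, and its width is $|\calF|+1 \le k$. Symmetrically, $\neg\calF \lor (f{=}0)$ is derivable from $C_1$ in width $\le k$. One resolution step on $f$ then yields $\neg\calF$, of width $|\calF| < k$, so $\neg\calF$ is width-$k$ derivable and $\calF \in \calH$ is contradicted immediately.

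In other words, your ``collapse modulo $\calF$'' analysis is precisely what semantic weakening does for you in one stroke; the width-$2k$ obstacle and the ``delicate execution inside a width-$k$ derivation'' never arise. Also, the hypothesis $k \ge r$ is used only where you first said---to place the axioms of $\phi$ among the width-$k$ derivable clauses for property~(2)---and plays no role in property~(4).
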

\begin{proof}
    Let $\calC = \{C_1, \ldots, C_m\}$ be the set of all linear clauses having a $\ResXor$ derivation from $\phi$ of width at most $k$. Let $\calH$ be the set of all linear systems $\calF$ such that $\card{\calF} \leq k$ and for every $C_i$, $i \in [m]$, there exists a solution of $\calF$ that satisfies $C_i$. We show that $\calH$ is a $k$-winning strategy.
    
    The set $\calC$ does not contain the empty clause since $\phi$ does not have refutations of width at most $k$. Hence, every clause in $\calC$ can be satisfied and $\calH$ contains the empty system.
    It is clear that $\calH$ satisfies properties 1--3 from the definition of the winning strategy. Consider the fourth property: we have a linear system $\calF \in \calH$ with $\card{\calF} < k$ and a linear form $f$. Suppose that there are no valid extensions of $\calF$ to $f$ in $\calH$. Hence, there exists a linear clause $C \in \calC$ falsified by every solution of $\calF_0 = \calF \land (f=0)$ and a linear clause $D \in \calC$ falsified by every solution of $\calF_1 = \calF \land (f=1)$. Equivalently, $C \vDash \lnot \calF \lor (f=1)$ and $D \vDash \lnot \calF \lor (f=0)$. It implies that both $\lnot \calF \lor (f=0)$ and $\lnot \calF \lor (f=1)$ have derivations of width $k$. Then $\lnot \calF$ also has a derivation of width $k$. Therefore, $\lnot \calF$ lies in $\calC$. However, it means that $\calF$ has a solution that satisfies $\lnot \calF$, which is a contradiction.
\end{proof}

\begin{lemma}\label{l3}
    If a linear CNF formula $\phi$ has a $(k+1)$-winning strategy $\calH$, then $\phi$ does not have a $\ResXor$ refutation of width $k$.
\end{lemma}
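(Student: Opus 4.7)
The plan is to proceed by contradiction. Suppose $\phi$ has a $\ResXor$ refutation $\pi$ in which every linear clause has width at most $k$. I will prove by induction on the derivation that for every linear clause $C$ appearing in $\pi$, its negation $\lnot C$ (viewed as a linear system) does \emph{not} lie in $\calH$. Applied to the empty clause at the end of $\pi$, this yields $\emptyset \notin \calH$. However, $\calH$ is non-empty and the empty system is semantically implied by every linear system, so property~3 (with $|\emptyset| = 0 \leq k+1$) forces $\emptyset \in \calH$, giving the desired contradiction.

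For the base case, if $C \in \phi$ is an axiom and $\lnot C \in \calH$, property~2 provides a solution of $\lnot C$ satisfying $C$, which is impossible by the definition of $\lnot C$. For the Weakening step deriving $D$ from $C$ we have $C \vDash D$, and hence $\lnot D \vDash \lnot C$; so if $\lnot D \in \calH$ then, since $|\lnot C| \leq k \leq k+1$, property~3 puts $\lnot C$ into $\calH$, contradicting the inductive hypothesis for $C$.

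The main case is Resolution, where $A \lor B$ is derived from $A \lor (f=0)$ and $B \lor (f=1)$. Assuming $\lnot(A \lor B) = \lnot A \land \lnot B$ lies in $\calH$, its size is at most $k < k+1$, so property~4 extends it by some equation $(f=a)$, $a \in \bbF_2$, while staying in $\calH$. If $a=0$, this extension semantically implies $\lnot B \land (f=0) = \lnot(B \lor (f=1))$, a system of at most $k+1$ equations, and property~3 then puts $\lnot(B \lor (f=1))$ into $\calH$, contradicting the inductive hypothesis on $B \lor (f=1)$; the case $a=1$ is symmetric and contradicts the hypothesis on $A \lor (f=0)$.

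The only delicate point is the size bookkeeping: each intermediate system must stay within the budget of $k+1$ equations, which is precisely why a $(k+1)$-winning strategy is needed to rule out refutations of width $k$ (one fewer level of slack would make the invocation of property~4 fail). I anticipate no further obstacle beyond carefully verifying these bounds at every inductive step.
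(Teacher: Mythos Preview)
Your proof is correct and follows essentially the same approach as the paper. The only cosmetic difference is the phrasing of the inductive invariant: the paper maintains that every $\calF \in \calH$ has a solution satisfying each derived clause $C$, whereas you maintain the equivalent statement (via property~3) that $\lnot C \notin \calH$; the treatment of axioms, weakening, and resolution then proceeds identically.
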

\begin{proof}
    Suppose such a refutation exists. We show by induction that for every system $\calF \in \calH$ and every linear clause $C$ from this refutation, there exists a solution of $\calF$ that satisfies $C$. Thus, the refutation cannot contain the empty clause, which gives a contradiction. The base case: if $C$ is the initial clause, then the statement is satisfied by definition of a winning strategy.
    
    For the inductive step we have to consider two cases.
    \begin{enumerate}
        \item
        $D$ is the result of an application of the weakening rule. Thus, $C \vDash D$ for some clause $C$ that was derived earlier. However, the solutions of $\calF$ that satisfy $C$ also satisfy $D$. Hence, the statement is also true for $D$.

        \item
        $C \lor D$ is the result of an application of the resolution rule to the clauses $C \lor (f = 0)$ and $D \lor (f = 1)$. Suppose that $\calF \vDash \lnot (C \lor D)$, i.e. $\calF$ does not have a solution that satisfies $C \lor D$. Note that $\lnot(C \lor D)$ is a linear system that contains at most $k$ equations. By the third property of a winning strategy, $\lnot (C \lor D) \in \calH$. Hence, $\lnot(C \lor D) \land (f = \alpha) \in \calH$ for some $\alpha \in \mathbb{F}_2$. By the induction hypothesis, this system has a solution $\sigma_0$ that satisfies $C \lor (f=0)$ and a solution $\sigma_1$ that satisfies $D \lor (f=1)$. However, if $\alpha = 0$, then every solution of $\lnot(C \lor D) \land (f = \alpha)$ falsifies $D \lor (f =1)$. Analogously, if $\alpha = 1$, then every solution of this system falsifies $C \lor (f=0)$, which gives a contradiction.
    \end{enumerate}
\end{proof}

We need the following well-known fact about linear algebra.
\begin{lemma}\label{fact:well_known}
    Let $Ax = b$ be a consistent linear system over a field $F$ and $f(x) = \alpha$ a linear equation over $F$. Then $f(x) = \alpha$  is semantically implied by $Ax = b$ if and only if $f(x) = \alpha$ is a linear combination of equations of $Ax = b$.
\end{lemma}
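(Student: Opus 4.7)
My approach is the standard two-direction argument from linear algebra. The forward direction---that any linear combination of the equations in $Ax = b$ is semantically implied by it---is essentially by definition, so I would dispose of it in a single sentence: if $f = \sum_i c_i a_i$ and $\alpha = \sum_i c_i b_i$ for $a_i$ the rows of $A$ and $c_i \in F$, then any solution $x$ of $Ax = b$ satisfies $f(x) = \sum_i c_i (a_i x) = \sum_i c_i b_i = \alpha$.

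For the nontrivial ``only if'' direction, I would fix one particular solution $x_0$ of $Ax = b$ (which exists by consistency) and observe that every solution has the form $x_0 + v$ with $v \in \ker A$. The semantic implication gives $f(x_0) = \alpha$, and then $f(x_0 + v) = \alpha$ for every $v \in \ker A$; subtracting yields $f(v) = 0$ on all of $\ker A$.

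The task then reduces to the purely linear-algebraic fact that a linear form vanishing on $\ker A$ lies in the row span of $A$. I would justify this by combining the rank--nullity identity $\dim \ker A + \mathrm{rank}(A) = n$ with the trivial inclusion $\mathrm{rowsp}(A) \subseteq \{g : g|_{\ker A} = 0\}$: both sides of this inclusion have dimension $n - \dim \ker A$, so they coincide. Hence $f = \sum_i c_i a_i$ for some scalars $c_i \in F$. Evaluating at $x_0$ then yields $\alpha = f(x_0) = \sum_i c_i b_i$, so $f(x) = \alpha$ coincides with the linear combination $\sum_i c_i (a_i x = b_i)$ of equations of $Ax=b$, as desired.

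I do not anticipate any real obstacle: everything here is elementary linear algebra. The only point of caution is that, over $\bbF_2$ (the case of interest in the paper), one cannot appeal to an ``orthogonal complement'' in the geometric or inner-product sense, but the dimension-count argument above invokes only rank--nullity and therefore goes through over any field $F$.
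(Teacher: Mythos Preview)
Your proof is correct and entirely standard. The paper gives no proof of this lemma at all: it is stated as a ``well-known fact about linear algebra'' and left unproved, so there is nothing to compare your argument to.
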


Itsykson and Sokolov~\cite{Itsykson2014} showed that the semantic weakening rule can be simulated by three syntactic rules which are more convenient to work with. The following lemma shows that such simulation does not change space complexity significantly.
\begin{lemma}\label{lem:prop2}
    The weakening rule can be simulated by the following syntactic rules: 
    \begin{enumerate}
        \item \emph{The simplification rule} that allows deriving $D$ from $D \lor (0 = 1)$.
        \item \emph{The syntactic weakening rule} that allows deriving $D \lor (f = \alpha)$ from $D$.
        \item \emph{The addition rule} that allows deriving $D\lor(f_1 = \alpha_1)\lor(f_1+f_2 = \alpha_1+\alpha_2+1)$ from $D\lor(f_1 = \alpha_1)\lor(f_2 = \alpha_2)$.
    \end{enumerate}

    The space of the new proof increases by at most $1$.
\end{lemma}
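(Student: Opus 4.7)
The plan is to replace each application of the semantic weakening rule by a short, explicit sequence of applications of the three listed syntactic rules, and then verify the space overhead. Fix a single semantic weakening step that derives $D = \bigvee_{j=1}^{n}(g_j = \beta_j)$ from $C = \bigvee_{i=1}^{m}(f_i = \alpha_i)$. In the generic case where $\neg D$ itself is satisfiable, the assumption $C \vDash D$ forces, for each $i$, the linear system $\{(f_i = \alpha_i)\} \cup \{(g_j = 1 - \beta_j) : j \in [n]\}$ to be inconsistent with a combination that must use $(f_i = \alpha_i)$ with coefficient $1$. Applying Lemma~\ref{fact:well_known} then yields coefficients $c_{i1}, \ldots, c_{in} \in \bbF_2$ with $f_i = \sum_{j} c_{ij}\, g_j$ and $\alpha_i + \sum_j c_{ij}(1 - \beta_j) = 1$.

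Using this linear certificate, I would simulate the inference in three stages starting from the clause $C$. First, apply the syntactic weakening rule $n$ times to adjoin each literal of $D$ to $C$, producing $C \vee D$. Second, for each $i \in [m]$ in turn, apply the addition rule $\card{\{j : c_{ij} = 1\}}$ times, each time keeping some literal $(g_j = \beta_j)$ of $D$ with $c_{ij} = 1$ and combining it into the current incarnation of the $i$-th literal originally from $C$. A direct $\bbF_2$ calculation then shows that the linear form in this literal becomes $f_i + \sum_{j : c_{ij} = 1} g_j = 0$ and the constant becomes $\alpha_i + \sum_{j : c_{ij} = 1} \beta_j + \card{\{j : c_{ij} = 1\}} = 1$, turning the literal into exactly $(0 = 1)$. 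Third, apply the simplification rule $m$ times to drop these contradictory literals, leaving precisely the clause $D$.

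For the space analysis, each of the three syntactic rules is single-premise and replaces one clause in the configuration by another, so each inference only briefly requires holding both the old and the new clause together, with the old clause erased immediately afterwards. Hence, if the configuration just before the original weakening step was $S \cup \{C\}$ of size $\card{S} + 1$, the simulation passes through configurations of size $\card{S} + 1$ between steps and of size $\card{S} + 2$ at the moment of each inference; this matches the peak $\card{S} + 2$ already reached by the original semantic weakening step when it produced $D$, so the overall space of the proof grows by at most one. The principal obstacle is the degenerate case where $\neg D$ is itself unsatisfiable, i.e.\ $D$ is a tautology: the above linear certificate need not involve $(f_i = \alpha_i)$, and a straightforward calculation shows that the additions then reduce the $i$-th literal only to $(f_i = \alpha_i + 1)$ rather than $(0 = 1)$. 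I would handle this separately---either by observing that semantic weakenings to a tautology may be excised from any refutation without affecting the final empty-clause derivation, or by prefacing the above reduction with a sequence of addition steps among the $D$-literals themselves, using the dependence $\sum_{j \in J} g_j = 0$ to expose a contradictory literal that can then be used to eliminate the remaining $(f_i = \alpha_i)$'s, while respecting the same $+1$ space bound.
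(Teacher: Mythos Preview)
Your proposal follows essentially the same three–stage simulation as the paper: adjoin the literals of $D$ by syntactic weakening, use the addition rule (driven by the linear dependence supplied by Lemma~\ref{fact:well_known}) to reduce each original $C$-literal to $(0=1)$, and then simplify. Your $\bbF_2$ bookkeeping for the addition stage is correct and matches the paper's argument phrased in the dual ``negated system'' language.

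Two remarks. First, you are more careful than the paper in isolating the degenerate case where $D$ is a tautology: Lemma~\ref{fact:well_known} requires $\neg D$ to be consistent, and the paper invokes it without comment. Your excision argument is the cleanest fix---any resolution step with a tautological premise is itself a semantic weakening of the other premise, so tautologies can be removed from the refutation without increasing space.

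Second, your space accounting has a small slip. You erase $C=C_0$ as soon as $C_1$ is derived, so the simulation ends at $S\cup\{D\}$ rather than at $S_i=S\cup\{C,D\}$; later steps of the original refutation that use $C$ would then be stranded. The paper instead keeps $C$ throughout the chain $C_1,\dots,C_{t-1},D$, so the intermediate configurations are $S\cup\{C,C_k\}$ (size $\card{S}+2$) with a momentary peak $\card{S}+3=\card{S_i}+1$ at each inference. This is precisely what yields the ``$+1$'' in the statement; your description actually gives peak $\card{S}+2=\card{S_i}$, which would be a $+0$ bound if it were valid. Keeping $C$ fixes the mismatch and gives exactly the claimed bound.
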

\begin{proof}
    This construction is motivated by Proposition~2.6 from~\cite{Itsykson2014}. Here we think about linear clauses as negotiations of linear systems. Let $\Phi$ be a linear system and consider the linear clause $\neg \Phi$. In this representation the syntactic weakening rule allows adding new equations to $\Phi$, the addition rule allows adding one equation in $\Phi$ to another, and the simplification rule allows removing trivial equations $0 = 0$ from $\Phi$. Assume that we derive a new clause $D = \lnot \bigwedge_{i\in I} (g_i = \beta_i)$ by an application of the weakening rule to $C = \lnot \bigwedge_{i\in J} (f_i = \alpha_i)$. We simulate this derivation by syntactic rules as follows:
    \begin{enumerate}
        \item Add equations $g_j = \beta_j$ to $C$ using the syntactic weakening rule, which gives $\lnot \big( \bigwedge_{i\in J} (f_i = \alpha_i) \land \bigwedge_{i\in I} (g_i = \beta_i) \big)$.
        \item Lemma~\ref{fact:well_known} implies that the equation $f_i = \alpha_i$ is a linear combination of the equations $g_j = \beta_j$. Thus, we can subsequently apply the addition rule and get $0 = 0$ from $f_i = \alpha_i$.
        \item We remove the equations $0=0$ from $\lnot \big( \bigwedge_{i\in J} (0 = 0) \land \bigwedge_{i\in I} (g_i = \beta_i)\big)$.
    \end{enumerate}

    After every operation, we obtain a new clause and download it to the memory. Note that to obtain a new clause, we only need the last derived clause and do not need clauses that were derived earlier. Hence, we can delete the previous clause from the memory on every step after we obtain the new one. We do not delete $C$, thus, we need at most one additional clause.
\end{proof}

\begin{lemma}\label{lem:l4}
    Let $\phi$ be an unsatisfiable linear $r$-CNF formula. If there is a $(k + r - 1)$-winning strategy, then the space of every refutation that uses only syntactic derivation rules is at least $k$.
\end{lemma}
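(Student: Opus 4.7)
The plan is to assume for contradiction that there is a syntactic $\ResXor$-refutation $\pi = (S_1, \ldots, S_t)$ with $\Space(\pi) < k$, so $\card{S_i} \le k - 1$ throughout. We build inductively a sequence $\calF_1, \ldots, \calF_t$ of systems in $\calH$ maintaining the invariant that $\calF_i$ is a set of at most $\card{S_i}$ linear literals and that every $C \in S_i$ contains a literal appearing in $\calF_i$ (in particular $\calF_i \vDash C$). The base case $\calF_1 = \emptyset$ lies in $\calH$ by property~3 applied to any member of $\calH$.

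The inductive step splits into three cases. An erasure is handled by dropping the literal that witnesses the erased clause, with a minor case distinction when this literal also witnesses some surviving clause (in which case the witness map is already non-injective, the size invariant automatically holds, and $\calF_{i-1}$ can be kept unchanged). Inference and download both place a new clause $C = \bigvee_{j=1}^{r'}(f_j = \alpha_j)$ of width $r' \le r$ into the configuration, and in both cases we adjoin a literal of $C$ to $\calF_{i-1}$ via an iterative use of property~4. Setting $\calG_0 \coloneqq \calF_{i-1}$, at step $j$ we apply property~4 to $\calG_{j-1}$ and the form $f_j$---applicable because $\card{\calG_{j-1}} \le \card{\calF_{i-1}} + (j-1) \le (k-1) + (r-1) = k + r - 2 < k + r - 1$---to obtain $a_j \in \bbF_2$ with $\calG_j \coloneqq \calG_{j-1} \cup \{(f_j = a_j)\} \in \calH$. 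The first time $a_{j^*} = \alpha_{j^*}$, we define $\calF_i \coloneqq \calF_{i-1} \cup \{(f_{j^*} = \alpha_{j^*})\}$, which is in $\calH$ by property~3 (using $\calG_{j^*} \vDash \calF_i$) and contains a literal of $C$.

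To see that the iteration must terminate with some $a_{j^*} = \alpha_{j^*}$ within $r' \le r$ rounds, suppose instead that $a_j = 1 - \alpha_j$ for every $j \le r'$; then $\calG_{r'} = \calF_{i-1} \cup \neg C$ belongs to $\calH$. For downloads, $C \in \phi$ is an axiom, and property~2 would produce a solution of $\calG_{r'}$ satisfying $C$, which is impossible since every solution of $\calG_{r'}$ satisfies $\neg C$. For inferences, soundness of the syntactic rules (Resolution, Simplification, Syntactic Weakening, Addition) gives $\calF_{i-1} \vDash C$, so $\calG_{r'}$ is outright inconsistent, again contradicting property~2 applied to any non-empty clause of $\phi$. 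The induction proceeds to step $t$, where $\bot \in S_t$ forces $\calF_t \vDash \bot$ and so $\calF_t$ has no solutions---one last contradiction with property~2. The main delicate point throughout is the size bookkeeping: it is essential that the invariant keeps $\card{\calF_{i-1}} \le k - 1$, leaving room for up to $r - 1$ additional equations during the iterative extension, which is precisely where the $+r-1$ slack between the space bound $k$ and the winning-strategy budget $k + r - 1$ is spent.
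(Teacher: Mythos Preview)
Your argument has a genuine gap in the inference case. You assert that ``inference and download both place a new clause $C = \bigvee_{j=1}^{r'}(f_j = \alpha_j)$ of width $r' \le r$ into the configuration,'' but this is only true for downloads. A clause produced by resolution, syntactic weakening, or addition can have arbitrarily large width (for instance, after many syntactic weakenings, or after resolving two wide intermediate clauses). Your iterative extension relies on being able to apply property~4 up to $r'$ times, which in turn requires $\card{\calF_{i-1}} + r' - 1 < k + r - 1$; with $\card{\calF_{i-1}}$ as large as $k-1$, you can afford at most $r$ extensions. So if the inferred clause $C$ has width exceeding $r$, the iteration may run out of budget before ever hitting a literal, even though $\calF_{i-1} \vDash C$. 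Concretely, take $\calF_{i-1} = \{\sum_{j=1}^{w} x_j = 1\}$ and $C = \bigvee_{j=1}^{w} (x_j = 0)$ with $w \gg r$: then $\calF_{i-1} \vDash C$, yet property~4 may legitimately return $a_1 = a_2 = \cdots = a_r = 1$, and after $r$ steps you are stuck with a consistent $\calG_r \in \calH$ that still contains no literal of $C$.

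The paper avoids this by \emph{not} treating inference uniformly with download. Instead it does a case analysis on the four syntactic rules and shows that in each case a literal of the derived clause can be obtained from $\calF_{i-1}$ using at most \emph{one} application of property~4. For resolution, simplification, and syntactic weakening, $\calF_{i-1}$ already contains a literal of the conclusion (using consistency of $\calF_{i-1}$ to rule out the resolved or trivial literal). The only delicate case is the addition rule when the witnessed literal of the premise is the one being replaced; there the paper extends once by $f_1$ and, if necessary, rewrites the resulting system using property~3 to exhibit the literal $(f_1 + f_2 = \alpha_1 + \alpha_2 + 1)$ of the conclusion. This structural analysis is exactly why the lemma is stated only for refutations using the syntactic rules of Lemma~\ref{lem:prop2} rather than the semantic weakening rule. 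Your download case and your overall bookkeeping (erasure, size invariant, final contradiction) are fine; you need to replace the uniform inference argument with this per-rule analysis.
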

\begin{proof}
    Let $\calH$ be a $(k + r - 1)$-winning strategy. We show that if $S_1, \ldots, S_m$ is a $\ResXor$ refutation of $\phi$ (i.e. sequence of configurations) of space less than $k$, then for all $i \in [m]$, every clause from $S_i$ is satisfiable. It is a contradiction since $S_m$ contains the empty clause. We construct by induction on $i$ a sequence of linear systems $\calF_i \in \calH$ such that $\calF_i \vDash S_i$ and $\card{\calF_i} = \card{S_i}$. Moreover, we maintain the following invariant: $\calF_i$ contains at least one equation from every clause from $S_i$.

    The base case is trivial: $\calF_1$ is the empty system.

    For the inductive step suppose that $\calF_{i-1}$ is already defined. We consider four possible cases for $S_i$.

    \begin{enumerate}
        \item
        $S_i$ is obtained from $S_{i-1}$ by a download operation. Thus, $S_i = S_{i-1} \cup \{C\}$, $C \in \phi$, and $C = (f_1 = \alpha_1 \lor \ldots \lor f_n = \alpha_n)$. By the fourth property of $\calH$, we can repeatedly extend $\calF_{i-1}$ with $f_j = a_j$, where $j \in [n]$, and obtain $\calF' \in \calH$. Note that $\card{\calF'} \leq k + r - 1$. By the second property, there exists a solution $\sigma$ of $\calF'$ that satisfies $C$, therefore there exists $j_0 \in [n]$ such that $\alpha_{j_0} = a_{j_0}$, since otherwise every solution of $\calF$ would falsify $C$. Now we delete the equations $f_j = a_j$ for $j \neq j_0$ and obtain $\calF_i$ such that $\card{\calF_i} = \card{\calF_{i-1}} + 1$, $\calF_i \vDash C$, and the invariant holds.

        \item
        $S_i = S_{i-1} \cup \{(C \lor D)\}$ by an inference operation that derives $C \lor D$ from $C \lor (f=0)$ and $D \lor (f=1)$. By the induction hypothesis, $\calF_{i-1}$ contains an equation from $C \lor (f=0)$ and an equation from $D \lor (f=1)$. Since $\calF_{i-1}$ is consistent, it does not contain both $(f = 0)$ and $(f = 1)$ simultaneously. Thus, $\calF_{i-1}$ contains an equation from $C \lor D$. We obtain $\calF_i$ by adding this equation to $\calF_{i-1}$. Then $\card{\calF_i} = \card{\calF_{i-1}} + 1$ and the invariant holds.

        \item
        $S_i = S_{i-1} \cup \{D\}$ by an inference operation, where $C \vDash D$. By assumptions, only syntactic rules can be applied. Thus, we have to consider three subcases:
        \begin{enumerate}
            \item\label{l4_sc2} The syntactic weakening rule that allows deriving $C \lor (f = \alpha)$ from $C$.

            The invariant implies that $\calF_{i-1}$ contains an equation $(g = a)$ from $C$ and we can define $\calF_{i} \coloneqq \calF_{i-1} \land (g = a)$.

            \item\label{l4_sc1} The simplification rule that allows deriving $D$ from $D \lor (0 = 1)$.

            $\calF_{i-1}$ contains an equation from $D\lor(0 = 1)$. This equation cannot be $0 = 1$ since $\calF$ is consistent, so we can copy it as in the previous case.

            \item\label{l4_sc3} The addition rule that allows deriving $D = C' \lor (f_1 = a) \lor (f_1 + f_2 = a_1 + a_2 + 1)$ from $C = C' \lor (f_1 = a_1) \lor (f_2 = a_2)$.

            Similarly, $\calF_{i-1}$ contains an equation from $C' \lor (f_1 = a_1) \lor (f_2 = a_2)$. If this equation is not $f_2 = a_2$, then we can copy it. Otherwise, we extend $\calF_{i-1}$ with $(f_1 = \alpha_1)$. We denote the new system as $\calG$. If $\alpha_1 = a_1$ then we define $\calF_i = \calG$. Otherwise, $\calG = \calF' \land (f_2 = a_2) \land (f_1 = a_1 + 1)$ for some linear system $\calF'$. $\calG$ is equivalent to $\calG' = \calF' \land (f_2 = a_2) \land (f_1 + f_2 = a_1 + a_2 + 1)$. By definition of a winning strategy, $\calG' \in \calH$. We define $\calF_i \coloneqq \calG'$.
        \end{enumerate}

        In all the cases above, $\card{\calF_i} = \card{\calF_{i-1}} + 1$ and the invariant holds.

        \item
        $S_i = S_{i-1} \setminus \{C\}$ by an erasure operation. The invariant implies that $\calF_{i-1} = \calF \land (f=a)$ for some $f=a$ from $C$ and some linear system $\calF$. We define $\calF_i \coloneqq \calF$. $\card{\calF_i} = \card{S_i}$ and the invariant holds since $\calF_i$ contains equations from the other clauses from $S_{i-1}$.

    \end{enumerate}
\end{proof}

\begin{lemma}\label{lem:cl1}
    Let $\phi$ be an unsatisfiable linear $r$-CNF formula. If there is a $(k + r - 1)$-winning strategy, then $\Space(\phi) \geq k-1$.
\end{lemma}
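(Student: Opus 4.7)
The plan is to combine Lemma~\ref{lem:l4} with the simulation given by Lemma~\ref{lem:prop2}. Lemma~\ref{lem:l4} already gives the desired bound, but only for refutations that are restricted to the syntactic rules (simplification, syntactic weakening, addition) in place of the semantic weakening rule. Lemma~\ref{lem:prop2} tells us that any application of semantic weakening can be replaced by a sequence of these syntactic rules at a cost of at most one extra clause in memory. So the natural route is to take an arbitrary $\ResXor$ refutation $\pi$ of $\phi$ that uses the full weakening rule, convert it into a purely-syntactic refutation $\pi'$, and then apply Lemma~\ref{lem:l4} to $\pi'$.

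Concretely, I would argue as follows. Suppose $\pi$ is an $\ResXor$ refutation of $\phi$ with $\Space(\pi) = s$. Walk through $\pi$ step by step, leaving download, erasure, and resolution steps untouched, and expanding each application of the semantic weakening rule according to the recipe in the proof of Lemma~\ref{lem:prop2}. That recipe performs a sequence of syntactic weakening, addition, and simplification steps, each producing a new clause which replaces its predecessor; crucially, only the clause currently being rewritten and the original premise need to be stored, so the memory grows by at most one clause during the whole block. Hence the resulting refutation $\pi'$ uses only syntactic rules and satisfies $\Space(\pi') \le \Space(\pi) + 1 = s + 1$.

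Now apply Lemma~\ref{lem:l4} to $\pi'$: since there is a $(k+r-1)$-winning strategy for $\phi$, we get $\Space(\pi') \ge k$, and combining with the inequality above yields $s \ge k-1$. Taking the minimum over all $\pi$ gives $\Space(\phi) \ge k-1$, as required.

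I do not expect any real obstacle here, since both ingredients have already been established; the only mildly delicate point is making sure the ``space grows by at most $1$'' bookkeeping in Lemma~\ref{lem:prop2} is applied uniformly across all weakening steps in $\pi$ (the extra clause used to stage the rewriting can be reused and erased between successive weakening blocks), so that the increase of $1$ applies to the whole refutation and not cumulatively per weakening step. This is exactly what Lemma~\ref{lem:prop2} guarantees, so the argument is a clean two-line combination of the earlier results.
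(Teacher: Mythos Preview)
Your proposal is correct and follows exactly the approach of the paper, which simply states that the lemma follows from Lemmas~\ref{lem:prop2} and~\ref{lem:l4}. You have spelled out the one-line combination in full detail, including the $+1$ space overhead from the syntactic simulation, which is precisely the intended argument.
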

\begin{proof}
    Follows from Lemmas~\ref{lem:prop2} and~\ref{lem:l4}.
\end{proof}

Finally, we are ready to state the connection between $\Space(\phi)$ and $\Width(\phi)$.

\begin{theorem}\label{space_vs_width}
    Let $\phi$ be an unsatisfiable linear $r$-CNF formula. Then $\Space(\phi) \geq \Width(\phi) - r - 1$.
\end{theorem}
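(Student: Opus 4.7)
The plan is to derive Theorem~\ref{space_vs_width} as a direct corollary of Lemmas~\ref{lem:sw_l2} and~\ref{lem:cl1} via a contrapositive argument, with a small edge case handled separately. Intuitively, Lemma~\ref{lem:sw_l2} converts the assumption of having no narrow refutation into the combinatorial object (a winning strategy), and Lemma~\ref{lem:cl1} converts such an object back into a space lower bound. All the real work already lives in those two lemmas; here we only need to match up the parameters.

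Concretely, let $w \coloneqq \Width(\phi)$. First I would dispose of the trivial case: if $w - 1 < r$, then $\Width(\phi) - r - 1 \le -1$, so the inequality $\Space(\phi) \ge \Width(\phi) - r - 1$ holds vacuously since $\Space(\phi) \ge 0$. So assume $w - 1 \ge r$. By definition of $\Width(\phi)$, no $\ResXor$ refutation of $\phi$ has width at most $w - 1$, so the hypothesis of Lemma~\ref{lem:sw_l2} is satisfied with $k = w - 1$. Applying the lemma yields a $(w - 1)$-winning strategy $\calH$ for $\phi$.

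Next I would invoke Lemma~\ref{lem:cl1}. To match its phrasing, choose $k$ so that $k + r - 1 = w - 1$, i.e., $k = w - r$. Since $\calH$ is a $(k + r - 1)$-winning strategy, Lemma~\ref{lem:cl1} gives
\[
    \Space(\phi) \ge k - 1 = w - r - 1 = \Width(\phi) - r - 1,
\]
which is exactly the desired bound.

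There is essentially no obstacle: the argument is purely parameter bookkeeping, and both technical ingredients (the equivalence between absence of narrow refutations and existence of a winning strategy, and the space lower bound obtained by maintaining a consistent companion system along a refutation) have already been established. The only point to be careful about is that Lemma~\ref{lem:sw_l2} requires $k \ge r$, which is precisely why the case $w - 1 < r$ must be handled separately; fortunately, the statement of Theorem~\ref{space_vs_width} is trivial in that regime.
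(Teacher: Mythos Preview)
Your proposal is correct and follows essentially the same approach as the paper: handle the trivial small-width case, then apply Lemma~\ref{lem:sw_l2} with parameter $\Width(\phi)-1$ to obtain a winning strategy, and feed it into Lemma~\ref{lem:cl1} with the matching choice of $k$. The only cosmetic difference is where you draw the boundary for the trivial case (you use $w-1<r$, the paper uses $w\le r+1$), but both thresholds make the argument go through.
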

\begin{proof}
    Let $\Width(\phi) = k$. If $k \le r+1$, then the statement is trivial. Otherwise, Lemma~\ref{lem:sw_l2} implies that there is a $(k-1)$-winning strategy. Now, by Lemma~\ref{lem:cl1}, $\Space(\phi) \geq k-r-1$.
\end{proof}

\subsection{Applications}

Garl\'{\i}k and Ko\l{}odziejczyk~\cite{Garlik2018} observed that every $\ResXor$ refutation of a CNF formula $\phi$ can be translated into a Polynomial Calculus refutation of its arithmetization. Moreover, this translation can be performed such that the degree of the resulting $\PC_{\bbF_2}$ refutation is equal to the width of the original $\ResXor$ refutation.

\begin{theorem}[follows from the proof of Theorem~18 in \cite{Garlik2018}; also see footnote~4 there]
    Let $\phi$ be an unsatisfiable CNF formula. Then $\Deg(\phi) \le \Width(\phi)$.
\end{theorem}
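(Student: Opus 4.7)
The plan is to translate any $\ResXor$ refutation of $\phi$ of width $w$ into a $\PC_{\bbF_2}$ refutation of the arithmetization of $\phi$ of degree at most $w$. The translation will be clause-wise: I would define a polynomial encoding of linear clauses whose degree matches the clause's width, and simulate each $\ResXor$ inference rule by $\PC_{\bbF_2}$ inferences preserving that degree.

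First, I would encode a linear clause $C = \bigvee_{i=1}^{k}(f_i = \alpha_i)$ by the polynomial $\rho(C) \coloneqq \prod_{i=1}^{k}(f_i + \alpha_i) \in \bbF_2[\vec{x}]$. This has total degree at most $k = |C|$, vanishes on exactly the satisfying assignments of $C$, and coincides with the standard arithmetization of $C$ when $C$ is an input clause. In particular, $\rho$ sends the empty clause to the constant polynomial $1$, which is the refutation target in $\PC_{\bbF_2}$. I would also record the $\bbF_2$-specific identity $(g+\beta)^2 \equiv g+\beta$ modulo the Boolean axioms $\{x_j^2 - x_j\}_j$, valid for any linear form $g$ over $\bbF_2$, so that every factor of $\rho(C)$ becomes idempotent in the quotient ring. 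The complementarity $f \cdot (f+1) \equiv 0$ modulo the Boolean axioms is another identity I would record for use in the resolution step.

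Next, I would induct on the $\ResXor$ refutation $\pi$ and show that for every derived linear clause $C$, the polynomial $\rho(C)$ admits a $\PC_{\bbF_2}$ derivation from the arithmetization of $\phi$ in which every intermediate polynomial has degree at most $w = \Width(\pi)$. The base case is immediate for axioms of $\phi$. For the weakening rule, I would first simulate a semantic weakening step by the three syntactic rules of Lemma~\ref{lem:prop2} (simplification, syntactic weakening, and addition), then translate each into a short $\PC_{\bbF_2}$ subproof whose polynomials track the widths of the produced clauses. For the resolution rule applied to $A \lor (f=0)$ and $B \lor (f=1)$, I would aim to derive $R = \rho(A)\rho(B)$ from $P = \rho(A)\cdot f$ and $Q = \rho(B)\cdot (f+1)$ using the identity $R = \rho(B)\cdot P + \rho(A)\cdot Q$ (valid because $f + (f+1) = 1$ over $\bbF_2$), realising the implicit polynomial multiplications as iterated applications of the variable-multiplication rule interleaved with reductions modulo the Boolean axioms.

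The principal obstacle is degree control in the resolution step: the naive factor-by-factor computation of $\rho(B) \cdot P$ passes through polynomials of degree $|A| + |B| + 1 = w + 1$ before the final cancellation brings the degree back down to $|A|+|B| \le w$. The delicate part of Garl\'{\i}k--Ko\l{}odziejczyk's argument is to schedule the multiplications, and to reduce modulo $\{x_j^2 - x_j\}_j$ eagerly enough, that every monomial actually written down is already multilinear of degree at most $w$; idempotency of each factor $(f_i + \alpha_i)$ together with $f(f+1) \equiv 0$ is what makes this scheduling possible, since the excess-degree contributions from $\rho(B) P$ can be recognised as multiples of $f(f+1)$ and cancelled against the corresponding terms of $\rho(A) Q$. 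An alternative semantic route I would consider, should the syntactic bookkeeping become unwieldy, is to extract a pseudo-expectation from a hypothetical $\PC_{\bbF_2}$ degree-$(w+1)$ lower bound and convert it into a $w$-winning strategy in the sense of Section~\ref{sec:space_width}, contradicting the width hypothesis via Lemma~\ref{l3}; this bypasses the explicit translation at the cost of losing constructivity.
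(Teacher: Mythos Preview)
The paper does not give its own proof here; the bound is quoted from Garl\'{\i}k--Ko\l{}odziejczyk. Your clause-wise translation via $\rho(C)=\prod_i(f_i+\alpha_i)$ is the intended simulation, and it handles the axioms and the syntactic weakening, addition, and simplification rules in degree at most $w$ exactly as you describe.

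The gap is in the resolution step: your ``scheduling with idempotency'' cannot remove the extra~$+1$. Take $A=(y{=}1)$, $B=(z{=}1)$, $f=x$; both parents and the resolvent $y\lor z$ have width~$2$, so $w=2$. Then $P=(1{+}y)(1{+}x)$, $Q=(1{+}z)x$, and the target is $R=(1{+}y)(1{+}z)$. Every nonzero $\bbF_2$-combination of $P$, $Q$ and the Boolean axioms $x^2{+}x,\,y^2{+}y,\,z^2{+}z$ already has degree~$2$ (the degree-$2$ monomials $xy,xz,x^2,y^2,z^2$ are independent), so no multiplication-by-variable step is admissible in degree~$\le 2$; the degree-$\le 2$ derivable set is therefore exactly this span, which never contains the monomial $yz$ and hence not $R$. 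Idempotency is irrelevant here since $A$, $B$, $f$ share no variables, and the degree-$(w{+}1)$ terms $\rho(A)\rho(B)f$ are not multiples of $f(f{+}1)$---they cancel only \emph{after} both summands of degree $w{+}1$ have been written down. Thus the clause-by-clause simulation yields only $\Deg(\phi)\le\Width(\phi)+1$, which is still sufficient for Corollary~\ref{space_vs_deg} and every downstream application. Your semantic route~(b) has the same off-by-one: a witness to a degree-$w$ $\PC_{\bbF_2}$ lower bound naturally gives a $w$-winning strategy, and Lemma~\ref{l3} then only rules out width $w{-}1$. If the exact inequality holds, the argument in~\cite{Garlik2018} must be global rather than a local step simulation, and neither of your two sketches reaches it.
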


Thus, every $r$-CNF formula with large degree in $\PC_{\bbF_2}$ also has large $\ResXor$ width. This allows to apply the above results to formulas which have non-trivial lower bounds on $\PC_{\bbF_2}$ degree.

\begin{corollary}\label{space_vs_deg}
    Let $\phi$ be an unsatisfiable $r$-CNF formula. Then $\Space(\phi) \ge \Deg(\phi) - r - 1$.
\end{corollary}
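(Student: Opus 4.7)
The plan is to chain the two displayed inequalities that immediately precede the corollary. First, I would invoke Theorem~\ref{space_vs_width}, which gives $\Space(\phi) \ge \Width(\phi) - r - 1$ for any unsatisfiable linear $r$-CNF formula $\phi$. Note that the original CNF $\phi$ is in particular a linear $r$-CNF (each literal $x$ or $\neg x$ is a special case of a linear equation), so the hypothesis of Theorem~\ref{space_vs_width} is satisfied without any extra work.

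Next, I would apply the theorem of Garl\'{\i}k and Ko\l{}odziejczyk stated just above the corollary, which asserts $\Deg(\phi) \le \Width(\phi)$ for every unsatisfiable CNF formula $\phi$. Substituting this lower bound on $\Width(\phi)$ into the previous inequality yields
\[
\Space(\phi) \;\ge\; \Width(\phi) - r - 1 \;\ge\; \Deg(\phi) - r - 1,
\]
which is exactly the claim of Corollary~\ref{space_vs_deg}.

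There is no real obstacle here: the work has already been done in Theorem~\ref{space_vs_width} (via the Spoiler--Duplicator game machinery of Lemmas~\ref{lem:sw_l2}--\ref{lem:cl1}) and in the cited degree-vs-width translation. The only thing to double-check is that the parameter $r$ on both sides refers to the same quantity, namely the maximum clause width of $\phi$ viewed as a CNF, which is consistent across both statements. Thus the corollary is essentially a one-line composition of the two results, and the proof should just read ``Follows by combining Theorem~\ref{space_vs_width} with the preceding theorem.''
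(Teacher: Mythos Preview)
Your proposal is correct and matches the paper's intended argument: the corollary is stated without proof precisely because it follows immediately by chaining Theorem~\ref{space_vs_width} with the Garl\'{\i}k--Ko\l{}odziejczyk inequality $\Deg(\phi) \le \Width(\phi)$, exactly as you describe.
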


We illustrate this idea by providing space lower bounds for ordering principle and pigeonhole principle.

\subsubsection{Ordering principle}\label{ssec:ord+exp}

In this section we show how Theorem~\ref{space_vs_width} can be used to prove space lower bounds on the ordering principle differently than the one that was shown in Section~\ref{ssec:ord}.
We consider the graph ordering principle which is a generalization of the ordering principle. Let $G = (V, E)$ be a graph. Given a vertex $u$, $N(u)$ is the set of neighbors of $u$. Similarly to the ordering principle, the graph ordering principle says that every ordering of the vertices of $G$ has the minimum. We encode it by a CNF formula $\GOP(G)$ in the variables ${(x_{uv})}_{u \neq v \in V(G)}$, where $x_{uv}$ states that $u$ precedes $v$. 
The clauses of $\GOP(G)$ are the following:
\begin{enumerate}
    \item Anti-symmetry: $\neg x_{uv} \lor \neg x_{vu}$, for all $u \neq v \in V(G)$.
    \item Totality: $x_{uv} \lor x_{vu}$, for all $u \neq v \in V(G)$.
    \item Transitivity: $\neg x_{uv} \lor \neg x_{vw} \lor x_{uw}$, for all distinct $u,v,w \in V(G)$.
    \item Non-minimality (i.e. non-existence of the minimum element):
    \[
        \NM_u \coloneqq \bigvee_{v \in N(u)} x_{vu}, \text{ for all } u \in V(G).
    \]
\end{enumerate}

If $G$ does not have isolated vertices, then $\GOP(G)$ is unsatisfiable.

Now we recall a family of graphs that have non-trivial lower bounds on PC degree.
A graph $G = (V, E)$ is an \emph{$(s, \delta)$-vertex expander} if for every subset of vertices $V' \subset V(G)$ $\card{V'} \leq s$, it holds that $\card{\calN(V')} \geq \delta \card{V'}$, where the neighborhood $\calN(V') = \{v \in V(G)\setminus V': |N(v) \cap V'| > 0\}$ is the set of all vertices in $V(G) \setminus V'$ that have a neighbor in $V'$.


\begin{theorem}[\cite{Galesi2010}, Theorem~1]\label{expand}
    For a non-bipartite graph $G$ that is an $(s, \delta)$-vertex expander it holds thatit holds that $\Deg(\GOP(G)) > \delta s / 4$.
\end{theorem}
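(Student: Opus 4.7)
The plan is to apply the standard Alekhnovich--Razborov framework for Polynomial Calculus degree lower bounds. I aim to construct, for $d := \lfloor \delta s / 4 \rfloor$, a linear pseudo-expectation $E \colon \bbF_2[x]_{\le d} \to \bbF_2$ with $E[1] = 1$ that ``satisfies'' every axiom of $\GOP(G)$, in the sense that $E[m \cdot A] = 0$ for every axiom $A$ and every monomial $m$ with $\deg(m \cdot A) \le d$; together with consistency under the multiplication-by-variable rule, such a functional immediately precludes a $\PC_{\bbF_2}$ refutation of degree at most $d$, yielding $\Deg(\GOP(G)) > d$.

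To define $E$, I would proceed combinatorially. For a monomial $m$, let $V(m) \subseteq V(G)$ be the set of vertices appearing in the literals of $m$; each literal touches two vertices, so $\card{V(m)} \le 2d \le \delta s / 2$. Using the $(s, \delta)$-expansion hypothesis, I define a \emph{closure} $cl(V(m))$ by iteratively adjoining any vertex most of whose neighborhood is already inside the current set. Expansion guarantees that $\card{cl(V(m))} \le s$ and, crucially, that every vertex $u \notin cl(V(m))$ still has a neighbor outside $cl(V(m))$. Then $E[m]$ is set to the probability, over a uniformly random linear order $\pi$ of $cl(V(m))$ compatible with $m$, that $\pi$ satisfies $m$ (and to $0$ if no compatible order exists); one extends $E$ by linearity.

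The verification splits cleanly. The antisymmetry, totality, and transitivity axioms mention at most three vertices and are automatically satisfied by any linear order on $cl(V(m))$, so $E[m \cdot A] = 0$ for them is immediate. The real content is the non-minimality axiom $\NM_u$. For any $m$ with $\deg(m \cdot \NM_u) \le d$, the choice of $d = \lfloor \delta s /4 \rfloor$ ensures that $u$ has a neighbor $v$ lying outside $cl(V(m) \cup \{u\})$; extending the canonical order of the closure to place $v$ immediately before $u$ forces $x_{vu} = 1$, so $\NM_u$ is satisfied and $E[m \cdot \NM_u] = 0$. The non-bipartite hypothesis enters here to rule out degenerate $2$-colorings of $G$ that would otherwise permit globally consistent ``alternating'' orderings and short-circuit the counting obstruction that the non-minimality axioms encode.

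The main obstacle is the \emph{consistency of $E$ under variable multiplication}: the closures and random order extensions for $m$ and for $x_{uv} \cdot m$ must couple coherently so that $E$ respects the PC multiplication rule. The cleanest route, as in Alekhnovich--Razborov, is to specify $E$ only at top degree $d$ via a carefully engineered distribution on orderings of the closure, and then propagate to lower-degree monomials by the required linear functional conditions; verifying that this coupling survives all the local moves of a PC derivation on $\GOP(G)$ is the technical heart of the argument and is where the expansion parameters enter quantitatively to produce the bound $\delta s / 4$.
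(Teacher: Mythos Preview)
The paper does not prove this theorem. It is stated with attribution to \cite{Galesi2010}, Theorem~1, and then immediately applied as a black box in Corollary~\ref{GOP}; there is no argument in the paper to compare your sketch against.

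On the sketch itself: the overall architecture---a degree-$d$ linear functional built from closures of small vertex sets together with (distributions over) linear orders on those closures---is indeed the shape of the Galesi--Lauria argument, which adapts Alekhnovich--Razborov to $\GOP$. Two places in your outline are shaky, though. First, over $\bbF_2$ you cannot literally set $E[m]$ equal to a ``probability'' and still have $E[1]=1$; the actual construction counts linear orders on the closure compatible with $m$ and must argue separately that the relevant normalization is nonzero in the field, which is where the real work hides and which your sketch does not address. Second, your explanation of the non-bipartiteness hypothesis---that it rules out ``globally consistent alternating orderings'' which would ``short-circuit'' the non-minimality axioms---is not what happens in the original proof; non-bipartiteness (an odd cycle) is used in the construction of the linear operator itself to keep it from collapsing, not in the verification of the $\NM_u$ axioms. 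Since you correctly flag the consistency-under-multiplication step as the technical heart and leave it unproved, what you have is a plausible plan rather than a proof.
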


\begin{corollary}\label{GOP}
    For a non-bipartite $d$-regular graph $G$ with $d \geq 3$ that is an $(s,\delta)$-vertex expander it holds that $\Space(\GOP(G)) \geq \delta s / 4 - d$.
\end{corollary}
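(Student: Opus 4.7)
The plan is to combine the two ingredients developed immediately before the corollary: the polynomial-calculus degree lower bound of Galesi--Lauria (Theorem~\ref{expand}) and the space-vs-degree trade-off of Corollary~\ref{space_vs_deg}. The only non-trivial bookkeeping is identifying the correct CNF width parameter $r$ of $\GOP(G)$ so that Corollary~\ref{space_vs_deg} can actually be invoked.

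First I would observe that $\GOP(G)$ is an $r$-CNF with $r = d$. Indeed, the anti-symmetry and totality axioms have width $2$, the transitivity axioms have width $3$, and each non-minimality axiom $\NM_u = \bigvee_{v \in N(u)} x_{vu}$ has width $|N(u)| = d$ because $G$ is $d$-regular; since $d \geq 3$, the maximum clause width is exactly~$d$. At this point we have a bona fide $d$-CNF to which the earlier machinery applies.

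Next I would apply Theorem~\ref{expand} to the $(s,\delta)$-vertex expander $G$, obtaining $\Deg(\GOP(G)) > \delta s / 4$, and then plug this into Corollary~\ref{space_vs_deg} with $r = d$, giving
\[
  \Space(\GOP(G)) \;\geq\; \Deg(\GOP(G)) - r - 1 \;>\; \delta s / 4 - d - 1,
\]
which yields $\Space(\GOP(G)) \geq \delta s/4 - d$ after rounding (since $\Deg(\GOP(G))$ is an integer and Theorem~\ref{expand} gives a strict inequality, we actually have $\Deg(\GOP(G)) \geq \lfloor \delta s/4 \rfloor + 1$, so the chain above produces $\Space(\GOP(G)) \geq \lfloor \delta s/4 \rfloor - d$, matching the stated bound).

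There is no serious obstacle here: the entire argument is a one-line concatenation of previously proved results, and the only care needed is in checking that the CNF width $r$ of $\GOP(G)$ really equals $d$ (which is why the hypothesis $d \geq 3$ is invoked, so that the width-$3$ transitivity axioms do not dominate). The proof is essentially the same structure one uses to deduce any space lower bound from a PC degree lower bound via the width-space connection.
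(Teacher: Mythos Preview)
Your proposal is correct and follows exactly the paper's own approach: the paper's proof simply observes that $\GOP(G)$ is a $d$-CNF (since every vertex has degree $d$) and then invokes Corollary~\ref{space_vs_deg} together with Theorem~\ref{expand}. You are in fact more careful than the paper about the off-by-one bookkeeping arising from the strict inequality in Theorem~\ref{expand} and the $-r-1$ in Corollary~\ref{space_vs_deg}; the paper simply states that the bound ``follows'' without chasing the last unit.
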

\begin{proof}
    $\GOP(G)$ is a $d$-CNF since the degree of every vertex is $d$. The statement follows from Corollary~\ref{space_vs_deg} and Theorem~\ref{expand}.
\end{proof}

This allows us to obtain a linear lower bound on the space complexity of $\Ordering_n$. Note that this bound is weaker than the one from Theorem~\ref{thm:ordering_bound}.

\begin{theorem}\label{GOPGOP}
    $\Space(\Ordering_n) = \Omega(n)$.

\end{theorem}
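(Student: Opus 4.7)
The plan is to derive this linear space lower bound from Corollary~\ref{GOP} by reducing $\Ordering_n$ to $\GOP(G)$ for a fixed constant-degree expander $G$, since a direct application of Corollary~\ref{space_vs_deg} to $\Ordering_n$ fails: the non-minimality clauses have width $n-1$, so the $-r-1$ term would eat any degree bound on $\Ordering_n$. The key observation that makes the reduction work is that $\Ordering_n$ and $\GOP(G)$ share all anti-symmetry, totality, and transitivity clauses; they differ only in the non-minimality clauses, where $\NM_u^{G} = \bigvee_{v \in N_G(u)} x_{vu}$ is a sub-clause of $\NM_u = \bigvee_{v \neq u} x_{vu}$, so $\NM_u$ is obtainable from $\NM_u^G$ by a single semantic weakening.

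First, I would fix a constant $d \ge 3$ and invoke a standard construction (for instance, a random $d$-regular graph, or an explicit Ramanujan construction) to obtain a $d$-regular non-bipartite graph $G$ on $n$ vertices that is an $(s,\delta)$-vertex expander with $s = \Omega(n)$ and $\delta$ a positive constant. Corollary~\ref{GOP} then yields
\[
    \Space(\GOP(G)) \ge \delta s / 4 - d = \Omega(n).
\]
Next, I would simulate any $\ResXor$ refutation $\pi$ of $\Ordering_n$ by a $\ResXor$ refutation $\pi'$ of $\GOP(G)$ with space at most $\Space(\pi)+1$. In $\pi'$ we leave every non-axiom step unchanged, copy every download of a non-NM axiom verbatim, and replace each download of an axiom $\NM_u$ by the three-step block: download $\NM_u^G$, apply the (semantic) weakening rule to derive $\NM_u$, and erase $\NM_u^G$. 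The intermediate configuration of this block holds at most one extra clause (namely $\NM_u^G$) beyond the configuration reached in $\pi$, so $\Space(\pi') \le \Space(\pi) + 1$.

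Combining the two steps gives $\Space(\Ordering_n) + 1 \ge \Space(\GOP(G)) = \Omega(n)$, and hence $\Space(\Ordering_n) = \Omega(n)$. The only nontrivial ingredient is the existence of a non-bipartite constant-degree vertex expander with linear-size expanding sets and constant expansion ratio, which is well known; everything else is bookkeeping for the simulation and an inequality. Since Theorem~\ref{thm:ordering_bound} already gives the sharper bound $n-2$, this alternative derivation is presented mainly to illustrate the width-space machinery of Section~\ref{sec:space_width} in action.
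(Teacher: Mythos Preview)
Your proof is correct and follows essentially the same approach as the paper: pick a constant-degree non-bipartite vertex expander $G$ on $n$ vertices, invoke Corollary~\ref{GOP} to get $\Space(\GOP(G))=\Omega(n)$, and transfer this to $\Ordering_n$ via the inclusion $\NM_u^G\subseteq\NM_u$. The paper compresses the transfer step into the sentence ``$\GOP(G)$ can be obtained by setting some variables in $\Ordering_n$ to $0$'' and concludes $\Space(\Ordering_n)\ge\Space(\GOP(G))$ directly; your explicit simulation (download $\NM_u^G$, weaken to $\NM_u$, erase $\NM_u^G$, costing at most one extra clause) is a more careful justification of the same inequality---indeed more accurate, since a literal variable restriction would clash with the totality axioms on non-edges.
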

\begin{proof}
    It is known that there exist $d \geq 3$, $\delta > 0$, $\epsilon > 0$, and $N$ such that for every $n \geq N$ there exists a $d$-regular $(\epsilon n, \delta)$-vertex expander $G$ on $n$ vertices (e.g. see~\cite{hoory06}, Theorem~4.16).

    Observe that $\GOP(G)$ can be obtained by setting some variables in $\Ordering_{n}$ to $0$.
    
    Thus, $\Space(\Ordering_{n}) \geq \Space(\GOP(G)) \geq \delta s / 4 - d$, where the last inequality follows from Corollary~\ref{GOP}.
\end{proof}


\subsubsection{Functional pigeonhole principle}

Functional pigeonhole principle is a generalization of the pigeonhole principle that also assumes functionality axioms, which disallow placing the same pigeon in several holes. Note that the proof of Lemma~\ref{lem:small_system_php} seems to fail with these axioms present. Moreover, since we need a formula with small initial width, we consider the graph variant of this principle.

Let $G = (U \sqcup V, E)$ be a bipartite graph. We define the CNF formula $\FPHP(G)$ consisting of the following clauses:
\begin{enumerate}
    \item Pigeon axioms: $P_u \coloneqq \bigvee_{v \in N(u)} p_{uv}$, for all $u \in U$.
    \item Hole axioms: $H_{uu'v} \coloneqq \lnot p_{uv} \lor \lnot p_{u'v}$, for all $u \neq u' \in U$ and all $v \in V$.
    \item Functionality axioms: $F_{uvv'} \coloneqq \lnot p_{uv} \lor \lnot p_{uv'}$, for all $u \in U$ and all $v \neq v' \in V$.
\end{enumerate}

In addition, we denote the classical functional pigeonhole principle as $\FPHP_n^{n+1}$, which is equivalent to $\FPHP(K_{n+1,n})$.

Now recall that a bipartite graph $G = (U \sqcup V, E)$ is a \emph{bipartite $(s, \delta)$-boundary expander} if for each vertex set $U' \in U$, $\card{U'} \leq s$, it holds that $\card{\partial(U')} \geq \delta \card{U'}$, where $\partial(U')$ is the set of all vertices in $V$ that have a unique neighbor in $U'$.

\begin{theorem}[\cite{Nordstrom2015}, Theorem~4.9]
    Suppose that $G = (U \sqcup V, E)$ is a bipartite $(s, \delta)$-boundary expander with left degree bounded by $d$. Then it holds that refuting $\FPHP(G)$ in polynomial calculus requires degree strictly greater than $\delta s / (2d)$.
\end{theorem}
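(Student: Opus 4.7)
The plan is to follow the Razborov--Alekhnovich framework for polynomial calculus degree lower bounds via boundary expansion, specialised to the functional pigeonhole principle. Set $d^{\star} \coloneqq \lfloor \delta s / (2d) \rfloor$. I would construct an $\bbF_{2}$-linear operator $R$ acting on polynomials of degree at most $d^{\star}$ that satisfies $R(1)=1$, $R(\alpha)=0$ for every arithmetisation $\alpha$ of an $\FPHP(G)$-axiom and for every Boolean axiom $p_{uv}^{2}-p_{uv}$, and whose action commutes with the two inference rules of $\PC_{\bbF_{2}}$ within the degree-$d^{\star}$ regime. The existence of such an $R$ directly contradicts the existence of a refutation of degree at most $d^{\star}$, since the constant polynomial $1$ is supposed to be derivable yet $R(1)=1\neq 0$.

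The construction of $R$ proceeds through a closure operator on $U$. Given $U_{0}\subseteq U$, one iteratively adds to $U_{k}$ any pigeon $u$ most of whose $d$ neighbours in $V$ have already been ``used up'' by the pigeons currently in $U_{k}$; this is the standard peeling construction. The $(s,\delta)$-boundary expansion of $G$ implies that any initial set of size at most $2d^{\star}$ closes into a set of size at most $s$, and that the closed set admits, via Hall's theorem applied to its boundary, a partial matching $M_{W}\colon W\to V$ respecting the edges of $G$. These matchings can be chosen compatibly: if $W\subseteq W'$ and both are closed of size at most $s$, then $M_{W}\subseteq M_{W'}$.

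With these matchings in place, one defines $R$ by extending linearly from monomials: for each monomial $m$ of degree at most $d^{\star}$, let $U(m)\subseteq U$ be the set of pigeons whose variables appear in $m$, so $|U(m)|\le d^{\star}$; then set $R(m)$ to be the value of $m$ under the $0/1$-assignment induced by $M_{\mathrm{cl}(U(m))}$. The axioms vanish by inspection: the arithmetised pigeon axiom $1-\sum_{v\in N(u)}p_{uv}$ is zero because $u\in\mathrm{cl}(\{u\})$ is matched by $M$ to one of its neighbours; the hole and functionality axioms vanish because $M$ is injective on both sides.

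The main obstacle is to verify the two remaining properties: $\bbF_{2}$-linearity of $R$ (in particular, that combining two polynomials with distinct pigeon supports gives the sum of their $R$-values) and commutation with multiplication by a variable $p_{uv}$. Both reduce to showing that, for polynomials $p,q$ of degree at most $d^{\star}$ (resp.\ less than $d^{\star}$), the union $U(p)\cup U(q)$ (resp.\ $U(p)\cup\{u\}$) still has closure of size at most $s$ and is matched compatibly with the individual closures. This is exactly where the factor $2d$ in the denominator is needed: it buys enough slack that the closure of every relevant set stays well inside the expansion threshold $s$ even after one more pigeon is incorporated. Once $R$ has been shown to be linear and rule-respecting, any $\PC_{\bbF_{2}}$ refutation of degree at most $d^{\star}$ would force $R(1)=0$, contradicting $R(1)=1$, which proves the claimed strict inequality $\Deg(\FPHP(G)) > \delta s/(2d)$.
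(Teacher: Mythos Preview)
The paper does not prove this statement; it is quoted from \cite{Nordstrom2015} (Theorem~4.9) and invoked as a black box in order to feed Corollary~\ref{space_vs_deg} and obtain the space lower bound for $\FPHP_n^{n+1}$. Your outline reconstructs the Alekhnovich--Razborov style argument that underlies the cited result, so you are supplying what the paper deliberately omits rather than proposing an alternative route.

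Two remarks on the sketch itself. First, since you define $R$ by linear extension from monomials, linearity is automatic and should not be listed among the ``remaining obstacles''; the only genuine obstacle is commutation with multiplication by a variable, and that is precisely where the compatible family of matchings $\{M_W\}$ does the work. Second, you assert that the matchings ``can be chosen compatibly'' under inclusion of closed sets, but this compatibility is not a consequence of Hall's theorem alone: it requires a canonical construction (for instance, a fixed global preference order on $V$ so that each pigeon is matched to its least available neighbour, or the random edge-weight device of Alekhnovich and Razborov). Without naming such a mechanism the sketch has a small gap, though one that is routinely filled in the cited source. Finally, the cited theorem holds over every field, not just $\bbF_2$; your restriction is harmless for the paper's application but is not needed for the argument.
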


\begin{theorem}
    $\Space(\FPHP_n^{n+1}) = \Omega(n)$.
\end{theorem}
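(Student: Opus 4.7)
The plan is to mirror the argument used for $\Ordering_n$ in Theorem~\ref{GOPGOP}: instead of attacking $\FPHP_n^{n+1}$ directly, I would pass to a restriction of it whose underlying bipartite graph is a sparse boundary expander, apply Corollary~\ref{space_vs_deg} to that restriction, and then transfer the lower bound back. Applying Corollary~\ref{space_vs_deg} to $\FPHP_n^{n+1}$ directly is hopeless, because its pigeon axioms have width $n$, so the $r+1$ correction term would swallow any degree lower bound.

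First, I would invoke the standard existence of bounded-degree bipartite boundary expanders: there are constants $d \geq 3$, $\delta > 0$, and $\epsilon > 0$ such that for all sufficiently large $n$ there is a bipartite graph $G_n = (U \sqcup V, E)$ with $|U| = n+1$, $|V| = n$, left degree at most $d$, which is an $(\epsilon n, \delta)$-boundary expander; such graphs are easily produced by a probabilistic argument or by known explicit constructions, and the inequality $|U| > |V|$ already ensures that $\FPHP(G_n)$ is unsatisfiable. The theorem cited immediately above (Theorem~4.9 of \cite{Nordstrom2015}) then yields $\Deg(\FPHP(G_n)) > \delta \epsilon n / (2d) = \Omega(n)$.

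Second, $\FPHP(G_n)$ is a $d$-CNF: its pigeon axioms have width at most $d$ by the left-degree bound, while its hole and functionality axioms are $2$-clauses. Hence Corollary~\ref{space_vs_deg} gives $\Space(\FPHP(G_n)) \geq \Deg(\FPHP(G_n)) - d - 1 = \Omega(n)$. Finally, $\FPHP(G_n)$ is exactly the restriction of $\FPHP_n^{n+1}$ obtained by setting $p_{uv} \coloneqq 0$ for every pair $(u,v) \notin E$: under this restriction each pigeon axiom collapses to $\bigvee_{v \in N(u)} p_{uv}$, and each hole or functionality axiom either becomes trivially true or survives as a clause of $\FPHP(G_n)$. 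Since applying a restriction clause-wise to any $\ResXor$ refutation of $\FPHP_n^{n+1}$ yields a valid $\ResXor$ refutation of $\FPHP(G_n)$ of no greater space, we conclude $\Space(\FPHP_n^{n+1}) \geq \Space(\FPHP(G_n)) = \Omega(n)$.

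The two steps that require a little care---and neither is a genuine obstacle---are the existence of a bipartite expander with the stated parameters (routine) and the fact that $\ResXor$ space is non-increasing under restriction (standard, and already tacitly used in the proof of Theorem~\ref{GOPGOP}).
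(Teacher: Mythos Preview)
Your proposal is correct and follows essentially the same approach as the paper: restrict $\FPHP_n^{n+1}$ to a bounded-left-degree bipartite boundary expander, invoke the $\PC$ degree lower bound of~\cite{Nordstrom2015}, apply Corollary~\ref{space_vs_deg}, and transfer back via the restriction. The paper's own proof is terser (it fixes left degree $3$ and defers the final step to ``analogous to the proof of Corollary~\ref{GOP}''), but the content is the same.
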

\begin{proof}
    It is known that there exist $\epsilon > 0$ and $\delta > 0$ such that there exists a family of bipartite $(\epsilon n, \delta)$-boundary expanders with left-degree $3$ (e.g. see~\cite{Nordstrom2015} and the references given there). It is easy to see that for every bipartite $G = (U \sqcup V, E)$ with $\card{U} = n+1$ and $\card{V} = n$, $\FPHP(G)$ can be obtained from $\FPHP_{n}^{n+1}$ by setting some variables to $0$. 

    The rest of proof is analogous to the proof of Corollary~\ref{GOP}.
\end{proof}

\begin{acks}
    The research is supported by \grantsponsor{RCSF}{Russian Science Foundation}{https://rscf.ru/en/} (project~\grantnum{RCSF}{18-71-10042}).
\end{acks}

\bibliographystyle{ACM-Reference-Format}
\bibliography{refs}


\end{document}